\documentclass{article}
\newif\ifdoubleblind
\doubleblindfalse

\newcommand\figdir{figures/} 
\usepackage{iclr2027_conference,times}
\usepackage[T1]{fontenc}
\usepackage[utf8]{inputenc}
\usepackage{amsmath,amsthm,amssymb,bm}
\usepackage{xcolor,graphicx,tikz,float,booktabs,wrapfig}
\usepackage{needspace}
\usetikzlibrary{arrows.meta}
\usepackage[framemethod=tikz]{mdframed}
\usepackage{hyperref,url}
\hypersetup{hidelinks}

\definecolor{cutblue}{RGB}{25,55,140}
\newif\ifcutmode\cutmodetrue

\newcommand{\psd}[1]{\mathbb{S}^{#1}_{+}}
\newcommand{\Cmat}[1]{C_{#1}}
\newcommand{\Mcap}{{M_\cap^*}}
\newcommand{\Rcap}{{R_\cap^*}}
\newcommand{\Wcup}{{W_\cup^*}}
\newcommand{\srcCh}[1]{P_{#1|Y}}
\newcommand{\PYQ}[1]{P_{Y|Q=#1}}
\newcommand{\preceqY}{\sqsubseteq_Y}
\newcommand{\leL}{\preceq}
\newcommand{\geL}{\succeq}

\newtheorem{thm}{Theorem}
\newtheorem{lem}{Lemma}
\newtheorem{prop}{Proposition}
\newcommand\CondGauss{\mathsf{GC}}
\newmdenv[backgroundcolor=black!5,linecolor=black!25,linewidth=0.5pt,
roundcorner=2pt,innertopmargin=6pt,innerbottommargin=6pt,
innerleftmargin=8pt,innerrightmargin=8pt,skipabove=6pt,skipbelow=6pt]{openproblem}
\allowdisplaybreaks

\newcommand{\eye}{I}

\newcommand{\where}{\text{\;where}\;}

\newcommand{\match}[1]{\ensuremath{\bm{#1}}}

\newcommand{\thmBlackwellStatement}[1]{%
Let $A,B$ be sources with a common target $Y\in\mathbb{R}^{d_Y}$ of
arbitrary distribution, whose posteriors are Gaussian with constant
covariance: $P_{Y|A=a}=\mathcal N(m_A(a),\Sigma_{Y|A})$ and
$P_{Y|B=b}=\mathcal N(m_B(b),\Sigma_{Y|B})$ almost surely, where $m_A,m_B$
are measurable and $\Sigma_{Y|A},\Sigma_{Y|B}\succeq0$. Then,
\begin{equation}
\srcCh{A}\preceqY\srcCh{B}\quad\iff\quad \Sigma_{Y|A}\geL\Sigma_{Y|B}. #1
\end{equation}%
}

\newcommand{\thmRedundancyStatement}[1]{%
Suppose that each source--target pair $(X_i,Y)$ is jointly Gaussian, with $\Sigma_{Y|X_i}\succeq0$. Then the Blackwell redundancy $I_\cap$ in Eq.~\eqref{eq:red} may be computed by
restricting the optimization to Gaussian channels of output dimension $d_Q=\dim\bigcap_i\operatorname{range}(\eye-C_i)\le\min(d_Y,\min_i d_i)$:
\begin{equation}
I_{\cap}=\max_{\srcCh{Q}\in\CondGauss(d_Q)} I(Y;Q)
\quad\where\quad \srcCh{Q}\preceqY \srcCh{X_i}\;\;\forall i . #1
\end{equation}%
}

\newcommand{\thmUnionStatement}[1]{%
Suppose that each source--target pair $(X_i,Y)$ is jointly Gaussian, with
$\Sigma_{Y|X_i}\succ0$. Then the Blackwell union information $I_\cup$ in Eq.~\eqref{eq:red} may be
computed by restricting the optimization to Gaussian channels of output
dimension 
$d_Q=\dim\operatorname{range}\!\left(\sum_i(\eye-C_i)\right)\le\min(d_Y,\sum_i d_i)$:
\begin{equation}
I_{\cup}=\min_{\srcCh{Q}\in\CondGauss(d_Q)} I(Y;Q)
\quad\where\quad \srcCh{X_i}\preceqY \srcCh{Q}\;\;\forall i . #1
\end{equation}%
}

\newcommand{\thmTwoSourceStatement}[1]{%
For two sources with $C_1,C_2\succ0$, let $V$ and $\Lambda=\operatorname{diag}(\lambda_1,\dots,\lambda_{d_Y})$ solve
the generalized eigenvalue problem $C_2V=C_1V\Lambda$ with $V^\top C_1V=\eye$. 
Then
Eq.~\eqref{eq:union-convex} is solved by
$\Wcup=V^{-\top}\operatorname{diag}(\min(1,\lambda_j))V^{-1}$ and 
\begin{equation}
I_\cup=\frac12\bigl(I(Y;X_1)+I(Y;X_2)\bigr)+\frac14\sum_{j=1}^{d_Y}|\ln\lambda_j|\,. #1
\end{equation}%
}

\ifdoubleblind\else\iclrfinalcopy\fi
\title{Redundancy and synergy in multivariate\\Gaussians via the Blackwell order}
\ifdoubleblind\author{Anonymous authors}\else
\author{Artemy Kolchinsky\\
\emph{XOR Labs}\\
\emph{Barcelona Collaboratorium for Modelling and Predictive Biology, Barcelona, Spain}\\
\emph{Universal Biology Institute, University of Tokyo, Japan}
}
\fi
\begin{document}
\maketitle
\begin{abstract}
The goal of the partial information decomposition (PID) is to quantify the redundant and synergistic information that multiple sources provide about a target. PID has many applications in machine learning, neuroscience, and other fields, but defining and computing it for high-dimensional continuous systems remains challenging. Here, we define a PID for multivariate Gaussian systems based on the Blackwell order, which formalizes when one channel is more informative than another. We prove that Gaussian channels are optimal for extracting both redundant and union information, yielding an intuitive geometric interpretation and an efficient numerical algorithm for the PID.  Our union information and synergy coincide with the well-known BROJA measures, and we derive closed-form expressions for both in the case of two sources. We also argue that Blackwell redundancy (which differs from BROJA) is the only existing redundancy measure that satisfies a set of natural desiderata. 
We demonstrate the scalability of our method on systems with up to a thousand dimensions or sources. Our approach is illustrated on an optimal control problem, where it identifies redundant and synergistic interactions between sensor and memory.
\end{abstract}

\section{Introduction}

In many scientific domains, one seeks to understand how information about
a target variable $Y$ is distributed among a set of source
variables $X_1,\dots,X_n$. For example, in neuroscience, one may ask how
information about a stimulus is encoded across different brain regions~\citep{luppi2024information}; in
machine learning, one may ask how different features, layers, or
modalities combine to produce the model's output~\citep{liang2023quantifying}; and so on. To answer such questions, the \emph{partial
information decomposition} (PID)~\citep{williams2010nonnegative} aims to decompose  
the information that the sources have about the target into \emph{redundancy} (present in each individual
source), \emph{union information} (present in at least one
source), \emph{unique information} (non-redundant information in each source), and \emph{synergy} (present only in the joint outcome
of the sources). In general, the PID is inspired by an
analogy between information theory and set theory. For instance, if the information
in each source is imagined as a set, then redundancy is the size
of the intersection of these sets, while synergy is the difference between
the size of the joint set and that of the
union~\citep{williams_information_2011,kolchinskyNovelApproachPartial2022}. 

The PID requires an extension of standard information theory because existing quantities such as mutual information conflate synergy and redundancy~\citep{williams2010nonnegative}. 
However, development of this extension faces conceptual and practical challenges.

Conceptually, it has proven  
difficult to define the PID in a way that is
operationally meaningful and applicable to any number of
sources. Nonetheless, recent work has shown that these difficulties can be overcome
by making the information-set analogy 
formal; the only requirement is a mathematical condition that specifies when 
one channel is less informative about the target than another~\citep{kolchinskyNovelApproachPartial2022}. The \emph{Blackwell order} provides a principled answer: one channel is
less informative if it can be produced by stochastic
processing of the other or, equivalently, if it is no better on 
any decision problem~\citep{blackwell_equivalent_1953}. The resulting
``Blackwell PID'' leads to nonnegative measures of redundant, union, and synergistic information for any number of sources~\citep{kolchinskyNovelApproachPartial2022}. 
The practical challenge is that PID quantities, such as redundancy and synergy, may be difficult to compute, especially in 
continuous systems with many sources or high-dimensional variables.

\begin{wrapfigure}{R}{0.5\textwidth}
\vspace{-12pt}
\centering
\resizebox{\linewidth}{!}{\begin{tikzpicture}[scale=1, every node/.style={font=\footnotesize}]
\definecolor{sourceone}{RGB}{62,134,160}
\definecolor{sourcetwo}{RGB}{46,104,178}
\definecolor{sourcethree}{RGB}{92,91,165}
\definecolor{totalinfo}{RGB}{220,178,0}
\tikzset{legendlabel/.style={anchor=west, inner sep=1pt, text=black!82,
        font=\large}}
\path[use as bounding box] (-5.70,-2.72) rectangle (5.70,2.72);
\begin{scope}[shift={(-1.55,0)}]
\begin{scope}[rotate=-5]
    \filldraw[fill=black!08, fill opacity=0.38, draw=totalinfo,
        line width=2pt] (0,0) ellipse (3.55 and 2.55);
\end{scope}
\begin{scope}[rotate=10]
    \draw[sourceone, line width=0.75pt] (0,0) ellipse (3.10 and 1.46);
\end{scope}
\begin{scope}[rotate=-61]
    \draw[sourcetwo, line width=0.75pt] (0,0) ellipse (2.35 and 1.55);
\end{scope}
\begin{scope}[rotate=48]
    \draw[sourcethree, line width=0.75pt] (0,0) ellipse (2.25 and 1.12);
\end{scope}
\begin{scope}[rotate=2.14]
    \draw[red!85!black, line width=1.5pt] (0,0) ellipse (3.136 and 2.264);
\end{scope}
\begin{scope}[rotate=53.6]
    \draw[green!60!black, line width=1.5pt] (0,0) ellipse (1.604 and 1.111);
\end{scope}
\filldraw[fill=white, draw=black, line width=2.3pt] (0,0) circle (1);
\end{scope}

\begin{scope}[shift={(2.95,1.68)}]
    \filldraw[fill=white, draw=black, line width=1pt] (0,0) circle (0.115);
    \node[legendlabel] at (0.30,0) {$I(Y;\varnothing)$};
    \draw[sourceone, line width=0.75pt] (0,-0.56) ellipse (0.20 and 0.07);
    \node[legendlabel] at (0.30,-0.56) {$I(Y;X_{1})$};
    \draw[sourcetwo, line width=0.75pt] (0,-1.12) ellipse (0.20 and 0.07);
    \node[legendlabel] at (0.30,-1.12) {$I(Y;X_{2})$};
    \draw[sourcethree, line width=0.75pt] (0,-1.68) ellipse (0.20 and 0.07);
    \node[legendlabel] at (0.30,-1.68) {$I(Y;X_{3})$};
    \filldraw[fill=black!10, fill opacity=0.38, draw=totalinfo,
        line width=1pt] (0,-2.24) ellipse (0.20 and 0.07);
    \node[legendlabel] at (0.30,-2.24) {$I(Y;X_1,X_2,X_3)$};
    \draw[green!60!black, line width=1pt] (0,-2.80) ellipse (0.20 and 0.07);
    \node[legendlabel] at (0.30,-2.80) {$I_{\cap}$};
    \draw[red!85!black, line width=1pt] (0,-3.36) ellipse (0.20 and 0.07);
    \node[legendlabel] at (0.30,-3.36) {$I_{\cup}$};
\end{scope}

\end{tikzpicture}}
\caption{\label{fig:1}PID geometry for multivariate Gaussians, shown here for three
sources $X_1,X_2,X_3$ and target $Y\in\mathbb{R}^2$. Each source $X_i$ is represented by a blue ellipse 
whose log-area is the mutual information $I(Y;X_i)$; the black circle
of unit area is the no-information channel. 
Redundancy $I_\cap$ corresponds to the largest (green) ellipse that is enclosed in every source and encloses the black circle; union information $I_\cup$ corresponds to the smallest (red) ellipse
enclosing all sources. Synergy is the difference of log-areas between the joint-source
ellipse (yellow) and the union ellipse.
\vspace{-20pt}
}
\end{wrapfigure}

As an important first step, we 
address these challenges in the setting of multivariate Gaussians, which 
play a fundamental role in time-series modeling, filtering, control theory, machine learning, and many other settings. Specifically, we demonstrate that the Blackwell order defines a principled PID for multivariate Gaussians. In doing so, we prove that Gaussian channels are optimal for extracting both redundant and union information. This leads to an intuitive geometric interpretation of PID redundancy, union, and synergy in terms of enclosing ellipsoids (Fig.~\ref{fig:1}). Moreover, we argue that the resulting measure of redundancy is the only existing one that satisfies a small set of natural desiderata (Sec.~\ref{sec:why-blackwell-redundancy}).  Extensions beyond Gaussians are also discussed.

Our formulation also has practical advantages. In particular, for any number of sources, we show that redundancy and union information can be efficiently computed by solving a convex optimization problem, which easily scales to a thousand dimensions or sources. In the special case of two sources, we uncover closed-form expressions for union information and synergy, allowing for very fast evaluation. 

We illustrate our approach on an optimal control
problem, where we quantify redundant and synergistic interactions between sensor and memory. Throughout, we measure information in units of nats. 
Proofs and additional material are provided in the Supplementary Material.


\section{Background: the Blackwell PID}
\label{sec:background}

We briefly review the Blackwell PID framework; see~\citet{kolchinskyNovelApproachPartial2022} for more 
details. 

Recall that, in set theory, the intersection of sets
$A_1,\dots,A_n$ is the largest set contained in each $A_i$; their union  
is the smallest set that contains each $A_i$. The sizes of the intersection
and the union can therefore be expressed variationally as
\begin{equation}
\Big|\bigcap_i A_i\Big|=\sup_{B:B\subseteq A_i\;\forall i}\,|B|,
\qquad\qquad
\Big|\bigcup_i A_i\Big|=\inf_{B:A_i\subseteq B\;\forall i}\,|B|.
\label{eq:set-analogy}
\end{equation}
These definitions require only a notion of set size $\vert \cdot \vert$ and the subset relation $\subseteq$. 

\citet{kolchinskyNovelApproachPartial2022} showed that
information-theoretic redundancy (i.e., intersection) and union information can be defined in a similar way. The analogue of a set is a \emph{channel} $\srcCh{A}$ (a conditional distribution of $A$ given the target), 
the analogue of set
size is the mutual information $I(Y;A)$, and the analogue of set inclusion
is an ordering relation $\srcCh{B}\preceqY\srcCh{C}$, which specifies when 
channel $B$ is ``less informative'' about $Y$ than channel $C$. Redundancy is the maximum of $I(Y;Q)$ over channels less informative than
every source, and union information is the minimum over channels
more informative than every source:
\begin{equation}
I_\cap:=\sup_{\srcCh{Q}:\srcCh{Q}\preceqY\srcCh{X_i}\;\forall i}I(Y;Q),
\qquad\qquad 
I_\cup:=\inf_{\srcCh{Q}:\srcCh{X_i}\preceqY\srcCh{Q}\;\forall i}I(Y;Q).
\label{eq:red}
\end{equation}
The
optimizing channels $\srcCh{Q^*}$ identify the content of redundant and union
information, not only their amount. 

Given the above definitions, the \emph{unique information} in source $i$ is $I(Y;X_i)-I_\cap\ge0$, the
\emph{excluded information} (union information absent from source $i$) is
$I_\cup-I(Y;X_i)\ge 0$, and \emph{synergy} is the information in the joint
sources beyond their union, $I_{\mathrm{syn}}:=I(Y;X_1,\dots,X_n)-I_\cup\ge 0$.

The definitions in Eq.~\eqref{eq:red} rely on a choice of ordering
relation. Perhaps the best-known and most principled choice is the
\emph{Blackwell order}~\citep{blackwell_equivalent_1953}. A channel
$\srcCh{B}$ is Blackwell less informative than $\srcCh{C}$, written
$\srcCh{B}\preceqY\srcCh{C}$, if there is a Markov kernel $\kappa$ from $C$
to $B$ such that $\srcCh{B}(E\vert y)=\int\kappa(E\vert c)\,\srcCh{C}(dc\vert y)$
for every measurable set $E$ and $P_Y$-almost every $y$. In other words,
$B$ can be simulated by \emph{garbling} the output of $C$, without further
access to $Y$. The Blackwell order has a fundamental operational characterization in decision theory~\citep{blackwell_equivalent_1953}:
$\srcCh{B}\preceqY\srcCh{C}$ iff an agent with $C$ can do at least as well as an agent
with $B$ on every decision problem
about $Y$.
Recently, the Blackwell order has attracted growing attention in machine learning applications~\citep{feng2026predictor,guo2026value,halder2025spuriousness,kaissis2024beyond}.

Combining the Blackwell order with Eq.~\eqref{eq:red}
gives the \emph{Blackwell PID}. Blackwell redundancy $I_\cap$ is the most
information one can have about $Y$ without being able to outperform any
single source on any decision problem; Blackwell union information $I_\cup$ is the
least information needed in order to perform at least as well as every
source. These measures are defined for any number of sources, uniquely
satisfy a natural set of axioms, and behave sensibly on various  
examples. Blackwell redundancy can also be formulated as an information-bottleneck-style objective, where one maximizes predictive information while hiding source identity~\citep{kolchinsky2024partial}.

Blackwell redundancy and union
information are formal duals, but their numerical values are not related by
inclusion--exclusion (M\"obius inversion). In fact, it has been argued that 
inclusion--exclusion should not apply to the PID, since it leads to inconsistencies~\citep{kolchinskyNovelApproachPartial2022}. 


\section{Other prior work}
\label{sec:existing-work}

An information-theoretic definition of union information and synergy, commonly referred to as
``BROJA'', was proposed by \citet{bertschinger2014quantifying}
and \citet{griffith2014quantifying}. The union information is defined as the smallest joint
information compatible with the source--target marginals,
\begin{equation}
I^{\mathrm{BR}}_\cup
:=\min_{\tilde P:\,\tilde P_{Y,X_i}=P_{Y,X_i}\ \forall i}
I_{\tilde P}(Y;X_1,\ldots,X_n).
\label{eq:broja-union}
\end{equation}
As usual, BROJA synergy is given by 
$I^{\mathrm{BR}}_{\mathrm{syn}}:=I(Y;X_1,\dots,X_n)-I^{\mathrm{BR}}_\cup$. 
Although defined in
different ways, Blackwell and BROJA union information/synergy were shown to be numerically equal:  $I_\cup=I^{\mathrm{BR}}_\cup$ and $I_{\mathrm{syn}}=I^{\mathrm{BR}}_{\mathrm{syn}}$~\citep{kolchinskyNovelApproachPartial2022}.  However, Blackwell redundancy is different from BROJA redundancy, which can be defined for two sources using inclusion--exclusion (see Sec.~\ref{app:two-source-iep}).


\citet{barrett_exploration_2015} first considered the PID in the Gaussian
setting. Given a scalar target and two sources, he showed that a broad class of
redundancy measures reduces to ``minimum mutual information'' (MMI) redundancy, defined as
$\min_i I(Y;X_i)$. 
\citet{goswami2024analytically} extend Barrett's MMI result to a broader class of scalar-target systems. Neither result generalizes to multivariate targets, because different sources can provide information about different directions of the target, which MMI cannot distinguish.

Recent work has considered BROJA PID for high-dimensional Gaussian and other continuous variables. \citet{venkatesh2023gaussian} develop a Gaussian-restricted BROJA optimizer, leaving the optimality of Gaussian solutions open. \citet{zhao2025partial} prove Gaussian optimality for two sources, introduce the faster Thin-PID solver, and use normalizing flows to extend the method to latent-Gaussian representations. \citet{pakman2021estimating} estimate BROJA unique information for low-dimensional continuous variables using copulas and variational optimization. \citet{liang2023quantifying} introduce estimators of BROJA PID quantities for high-dimensional multimodal data with non-Gaussian distributions. \citet{liardi2025scalable} propose a modified BROJA union information for multivariate targets, which preserves only the pairwise marginals between individual source and target components, and compute it for Gaussian systems.

Several alternative PIDs have also been proposed; see the following section and Sec.~\ref{app:why-blackwell}.

\section{Why Blackwell redundancy?}
\label{sec:why-blackwell-redundancy}

Before proceeding, we motivate Blackwell redundancy,
the most novel term in the Blackwell PID. 

Defining a measure of 
redundancy that works for more than two sources has proven remarkably difficult. In fact, many existing approaches 
are restricted to two
sources, including BROJA redundancy~\citep{bertschinger2014quantifying} and measures based on deficiency~\citep{venkatesh2022partial}, dependency constraints~\citep{kay2018exact}, information geometry~\citep{kay2024partial}, and conditional independence~\citep{lyu2026closed}. 
To our knowledge, besides Blackwell redundancy, there are five nontrivial 
redundancy measures that are defined for any number of sources and can be
evaluated for Gaussian variables: MMI~\citep{barrett_exploration_2015},
$I_\cap^{\min}$~\citep{williams2010nonnegative},
$I_\cap^{\mathrm{PM}}$~\citep{milzman2022signed},
$I_\cap^{\mathrm{ccs}}$~\citep{ince2017measuring}, and
$I_\cap^{\mathrm{sx}}$~\citep{ehrlich2024shared}.\footnote{The Gács--Körner
redundancy of \citet{griffith2014intersection}, based on the common information of \citet{gacs_common_1973}, is also defined for any number of
sources. We exclude it because it is trivial in generic Gaussian settings, vanishing whenever the sources have a nondegenerate joint Gaussian distribution (Sec.~\ref{app:why-blackwell}).}

Among these multi-source measures, Blackwell redundancy $I_\cap$ is the only one that 
satisfies two natural conditions. The first condition is a central axiom of the 
PID~\citep[Ch.~4]{williams_information_2011}; it states that redundancy should be monotonic, meaning $I_\cap$ does not increase
when additional sources are added (i.e., $A\cap B\cap C$ cannot be larger than $A\cap B$).
The second condition, usually termed the ``Independent Identity'' property in the literature~\citep{ince2017measuring}, states that redundancy should vanish when different sources depend on independent features
of the target. Specifically, $I_\cap=0$ if the target
$Y=(Y_1,Y_2)$ contains two independent features $Y_1\perp Y_2$, and each
source depends on only one of them, $P_{X_1|Y}=P_{X_1|Y_1}$ and
$P_{X_2|Y}=P_{X_2|Y_2}$. 

\begin{wraptable}{r}{0.35\textwidth}
\vspace{-1.5\baselineskip}
\caption{Redundancy in nats}
\label{tab:why-blackwell}
\centering
\setlength{\tabcolsep}{6pt}
\renewcommand{\arraystretch}{1.12}
\begin{tabular}{@{}lcc@{}}
\toprule
 & COPY & COPY+\\
\midrule
Expected                & $\match{0}$ & $\match{0}$\\
\midrule
$I_\cap$ (ours)         & \match{0} & \match{0}\\
MMI                     & $0.35$ & $0.35$\\
$I_\cap^{\mathrm{ccs}}$ & \match{0} & $0.01$\\
$I_\cap^{\min}$         & $0.19$ & $0.15$\\
$I_\cap^{\mathrm{PM}}$  & $0.35$ & $0.36$\\
$I_\cap^{\mathrm{sx}}$  & $0.35$ & $0.35$\\
\bottomrule
\end{tabular}
\end{wraptable}

We construct two Gaussian examples to test these properties (details in Sec.~\ref{app:why-blackwell}). Let
$Y_1,Y_2,\xi_1,\xi_2,\xi_3$ be independent standard Gaussians.
In the {COPY} example, the target has two independent
features, $Y=(Y_1,Y_2)$, and the two sources are $X_1=Y_1+\xi_1$ and
$X_2=Y_2+\xi_2$. Each source observes a different feature, so the second condition 
requires $I_\cap(Y;X_1,X_2)=0$.
The {COPY+} example adds a third source, $X_3=(Y_1+Y_2)/\sqrt2+\xi_3$.
By monotonicity, the three-source redundancy should be zero.
Table~\ref{tab:why-blackwell} reports redundancy. MMI, $I_\cap^{\min}$,
$I_\cap^{\mathrm{PM}}$, and $I_\cap^{\mathrm{sx}}$ already give positive
redundancy for COPY, violating the second condition. 
$I_\cap^{\mathrm{ccs}}$ gives zero for COPY but positive redundancy for
COPY+, violating monotonicity. Among the measures in the table, only Blackwell redundancy gives the
expected zero in both examples. 



\section{Blackwell PID for Gaussians}
\label{sec:gaussian-pid}
\label{sec:gaussian-channels}

We now present the Blackwell PID for multivariate Gaussians.
We suppose that the target
$Y\in\mathbb{R}^{d_Y}$ is multivariate Gaussian with mean $\mu_Y$ and
positive-definite covariance $\Sigma_Y\succ0$. We also suppose  that each source
$X_i\in\mathbb{R}^{d_i}$ is jointly Gaussian with the target, so that
$\srcCh{X_i}(\cdot|y)=\mathcal{N}\big(\mu_{X_i}+\Sigma_{X_iY}\Sigma_Y^{-1}(y-\mu_Y),\Sigma_{X_i\vert Y}\big)$. 
Since redundancy and union information depend only on the source--target
marginals, it is enough that each pair $(X_i,Y)$ is jointly Gaussian;
synergy additionally involves the joint distribution of all sources and the target.

In our first result, we show that the Blackwell order can be written as a matrix inequality
between conditional covariance matrices. Here $\leL$ is the Loewner order, i.e., 
$V\leL W$ means that $W-V$ is positive semidefinite.
\begin{thm}
\label{thm:posterior-gaussian-blackwell}
\thmBlackwellStatement{\label{eq:gaussian-blackwell}}
\end{thm}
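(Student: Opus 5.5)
The plan is to work with the Markov-chain formulation of the Blackwell order and exploit that Gaussian convolution can be undone. Throughout, write $M_A:=m_A(A)$ and $M_B:=m_B(B)$. The hypothesis says $Y=M_A+\varepsilon_A$ with $\varepsilon_A\sim\mathcal N(0,\Sigma_{Y|A})$ independent of $A$, and likewise $Y=M_B+\varepsilon_B$ with $\varepsilon_B\sim\mathcal N(0,\Sigma_{Y|B})$ independent of $B$. In particular
\[
P_Y=\mathrm{law}(M_A)*\mathcal N(0,\Sigma_{Y|A})=\mathrm{law}(M_B)*\mathcal N(0,\Sigma_{Y|B}).
\]
Since the Blackwell order depends on a channel only through the law of the posterior it induces, the pair $(A,\Sigma_{Y|A})$ can be replaced by the statistic $M_A$, and similarly for $B$.

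\textbf{($\Rightarrow$).} Suppose $\srcCh{A}\preceqY\srcCh{B}$. Realize the garbling $\kappa$ so that $Y - B - A$ is a Markov chain. Fix a direction $v$. By the Markov property, the law of $Y$ given $(A,B)$ equals the law of $Y$ given $B$. The conditional law of total variance then gives, almost surely,
\[
v^\top\Sigma_{Y|A}v=\operatorname{Var}(v^\top Y\mid A)=\mathbb E\bigl[\operatorname{Var}(v^\top Y\mid B)\,\big|\,A\bigr]+\operatorname{Var}\bigl(\mathbb E[v^\top Y\mid B]\,\big|\,A\bigr)\ \ge\ v^\top\Sigma_{Y|B}v.
\]
This uses only conditional second moments, which are finite because the posteriors are Gaussian. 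So no moment assumption on $Y$ is needed, which matters since $Y$ has arbitrary distribution. Because $v$ is arbitrary, this yields $\Sigma_{Y|A}\geL\Sigma_{Y|B}$.

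\textbf{($\Leftarrow$).} Suppose $\Sigma_{Y|A}\geL\Sigma_{Y|B}$ and set $D:=\Sigma_{Y|A}-\Sigma_{Y|B}\succeq0$. The displayed identity for $P_Y$ gives
\[
\mathrm{law}(M_B)*\mathcal N(0,\Sigma_{Y|B})=\mathrm{law}(M_A)*\mathcal N(0,D)*\mathcal N(0,\Sigma_{Y|B}).
\]
The characteristic function of $\mathcal N(0,\Sigma_{Y|B})$ never vanishes, so it can be divided out. Hence $M_B\stackrel{d}{=}M_A+G$ with $G\sim\mathcal N(0,D)$ independent of $M_A$.

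This gives a coupling of $M_A$ and $M_B$ in which the conditional law of $M_B$ given $M_A$ is $\mathcal N(M_A,D)$. Averaging the $B$-posteriors over this coupling recovers the $A$-posterior:
\[
\int\mathcal N(m',\Sigma_{Y|B})\,\mathcal N(dm'\mid M_A,D)=\mathcal N(M_A,\Sigma_{Y|A}).
\]
This is exactly the condition in the Blackwell--Sherman--Stein theorem (``the $A$-posterior law is a mean-preserving contraction of the $B$-posterior law'') for $\srcCh{A}\preceqY\srcCh{B}$.

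To avoid quoting that theorem in infinite-dimensional generality, I would instead build the kernel explicitly. Let $\tilde A=M_B-G'$, where $G'$ is drawn from the reverse conditional law of $M_A$ given $M_B$ in the coupling above. Then check directly that $Y - B - \tilde A$ is a Markov chain and that $(Y,\tilde A)$ has the same posterior structure as $(Y,M_A)$. The remaining step is to pass from $M_A$ back to $A$. The conditional law of $A$ given $M_A$ does not depend on $Y$, since $M_A$ is sufficient for $A$ with respect to $Y$, so a second kernel sampling $A$ given $M_A$ finishes the construction.

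\textbf{Main obstacle.} I expect the hardest part to be the measure-theoretic bookkeeping in ($\Leftarrow$). One must verify that the deconvolution-based coupling yields a genuine Markov kernel from $B$ to $A$ that reproduces $\srcCh{A}$ for $P_Y$-almost every $y$, rather than merely matching the joint law with $Y$. I would handle this by working on the joint space of $(Y,M_B,M_A)$ and computing $P_{Y\mid M_A}$ via Bayes' rule with Gaussian likelihoods. Degenerate $D$ or $\Sigma_{Y|B}$ (singular covariances) are covered because characteristic functions handle singular Gaussians without modification.
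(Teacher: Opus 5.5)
Your proposal is correct and follows the paper's proof. For ($\Rightarrow$) both use the conditional law of total covariance on a coupling in which $Y-B-A$ is Markov, and for ($\Leftarrow$) both reduce to the sufficient statistics $m_A(A)$ and $m_B(B)$, divide out the nonvanishing Gaussian characteristic function to get $P_{m_B(B)}=P_{m_A(A)}*\mathcal N(0,D)$, and build an explicit coupling that garbles $m_B(B)$ into $m_A(A)$. One small slip: in defining $\tilde A$, $G'$ should be drawn from the conditional law of $G=M_B-M_A$ given $M_B$, not of $M_A$. The paper avoids this, and your Bayes-rule bookkeeping, by sampling forward, $V_B=V_A+U$ and $Y=V_B+Z$ with independent $U\sim\mathcal N(0,D)$ and $Z\sim\mathcal N(0,\Sigma_{Y|B})$, which makes both the chain $Y-V_B-V_A$ and the correct law of $(Y,V_A)$ immediate.
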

In this result, we do not assume joint Gaussianity,
only that the target may be written as a (possibly nonlinear) function of
each source plus independent Gaussian noise. In Sec.~\ref{sec:beyond-gaussians}, we 
use this to discuss generalizations beyond Gaussian assumptions. 
For jointly Gaussian channels, Theorem~\ref{thm:posterior-gaussian-blackwell}
reduces to a classical result by~\citet{hansen1974comparison}; see also
\citet[Theorem~2]{venkatesh2022partial}. 

Theorem~\ref{thm:posterior-gaussian-blackwell} has a simple geometric
interpretation. Associate with each channel $\srcCh{A}$ the ellipsoid
$\mathcal{E}_A:=\{z\in\mathbb{R}^{d_Y}:z^\top\Sigma_{Y|A}z\le1\}$. Then the
Blackwell order is ellipsoid inclusion: $\srcCh{A}\preceqY\srcCh{B}$ if and
only if $\mathcal{E}_A\subseteq\mathcal{E}_B$. The no-information ellipsoid 
$\mathcal{E}_\varnothing:=\{z:z^\top\Sigma_Yz\le1\}$ represents a channel that has no information about the target, and it is
contained in every $\mathcal{E}_A$. Mutual information across the channel is a log-volume
ratio, $I(Y;A)=\frac12\ln(|\Sigma_Y|/|\Sigma_{Y|A}|)
=\ln(\mathrm{Vol}\,\mathcal{E}_A/\mathrm{Vol}\,\mathcal{E}_\varnothing)$.
Fig.~\ref{fig:1} illustrates this
geometry for a two-dimensional target in whitened coordinates, with areas
normalized so that $\mathcal{E}_\varnothing$ has unit area.

In the following, it will be 
convenient to work in whitened target coordinates, using the
\emph{normalized posterior covariance},
\begin{align}\Cmat{A}:=\Sigma_Y^{-1/2}\Sigma_{Y\vert A}\Sigma_Y^{-1/2}\,.\end{align}
This matrix obeys 
$I(Y;A)=-\tfrac12\ln|\Cmat{A}|$. For the source channels $\srcCh{X_i}$, we abbreviate $C_i:=\Cmat{X_i}$. 
By
Theorem~\ref{thm:posterior-gaussian-blackwell},
$\srcCh{A}\preceqY\srcCh{B}$ iff $\Cmat{A}\geL\Cmat{B}$. 

Observe that we have $0\preceq\Cmat{A}\preceq\eye$ always. 
Conversely, every matrix $0\preceq R\preceq\eye$ is the normalized
posterior covariance of some Gaussian channel (see
Sec.~\ref{app:gaussian-comparison}),
\begin{equation}
\srcCh{Q_R}(\cdot\vert y)
=\mathcal N\!\left((\eye-R)^{1/2}\Sigma_Y^{-1/2}(y-\mu_Y),\,R\right).
\label{eq:canonical-posterior-channel}
\end{equation}

\subsection{Redundancy}
\label{sec:redundancy}

We now consider Blackwell redundancy for multivariate Gaussians.  In the following, we write $\CondGauss(d)$
for the set of Gaussian channels $\srcCh{Q}$ with $Q\in\mathbb{R}^{d}$,
allowing singular conditional covariances, and $\psd{d}$ for the set of
positive semidefinite $d\times d$ matrices.

The definition of Blackwell redundancy~\eqref{eq:red}
optimizes over all channels, including non-Gaussian and
infinite-dimensional ones. Our next result shows that the optimizer may be taken as Gaussian.
\begin{thm}
\label{thm:gaussian-optimality-redundancy}
\thmRedundancyStatement{\label{eq:redThm}}
\end{thm}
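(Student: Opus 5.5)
The plan has three steps. First, show that any feasible channel, Gaussian or not, can be replaced by a Gaussian one with at least as much information. Second, show that the best Gaussian channel can be compressed to output dimension $d_Q$. Third, check that the optimal Gaussian channel is genuinely feasible.

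\textbf{Step 1: Gaussian replacement.} Fix an arbitrary channel $\srcCh{Q}$ with $\srcCh{Q}\preceqY\srcCh{X_i}$ for all $i$. I will compare it to the Gaussian channel built from its error covariance. Let $\hat Y(Q):=\mathbb E[Y\mid Q]$, whitened, and define $R:=\mathbb E[\operatorname{Cov}(\tilde Y\mid Q)]$, where $\tilde Y=\Sigma_Y^{-1/2}(Y-\mu_Y)$. Then $0\preceq R\preceq\eye$.

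The key claim is that $R\succeq C_i$ for each $i$. Since $Q$ is a garbling of $X_i$, $Q$ is conditionally independent of $Y$ given $X_i$. The MMSE estimator of $\tilde Y$ from $X_i$ is therefore at least as good as the one from $Q$. The MMSE error covariance from $X_i$ is exactly $C_i$, because the posterior is Gaussian with constant covariance. This gives $R\succeq C_i$ in the Loewner order, since the argument works for every direction $v$ by estimating $v^\top\tilde Y$.

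Next, the Gaussian maximum-entropy bound gives
\[
h(\tilde Y\mid Q)\le \tfrac12\ln\bigl((2\pi e)^{d_Y}|R|\bigr),
\]
hence $I(Y;Q)\le -\tfrac12\ln|R|$. The canonical Gaussian channel $Q_R$ of Eq.~\eqref{eq:canonical-posterior-channel} has normalized posterior covariance $R$. By Theorem~\ref{thm:posterior-gaussian-blackwell}, $Q_R\preceqY X_i$ for every $i$, and $I(Y;Q_R)=-\tfrac12\ln|R|\ge I(Y;Q)$. So the supremum may be restricted to Gaussian channels, i.e.\ to $\max\{-\tfrac12\ln|R| : C_i\preceq R\preceq\eye\ \forall i\}$. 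This set is compact and the objective is continuous, possibly equal to $+\infty$ if singular $R$ is allowed, so the maximum is attained.

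\textbf{Step 2: dimension reduction.} I need to show the optimal $R^*$ satisfies $\operatorname{range}(\eye-R^*)\subseteq\bigcap_i\operatorname{range}(\eye-C_i)$. From $R^*\succeq C_i$ we get $\eye-R^*\preceq \eye-C_i$. For PSD matrices, $0\preceq A\preceq B$ implies $\ker B\subseteq\ker A$, hence $\operatorname{range}A\subseteq\operatorname{range}B$. So $\operatorname{rank}(\eye-R^*)\le d_Q$.

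The canonical channel only transmits $(\eye-R^*)^{1/2}\tilde Y$ plus noise. Projecting onto that range, the channel restricted to the $d_Q$-dimensional range is a Blackwell-equivalent Gaussian channel with output in $\mathbb R^{d_Q}$. To make this concrete, write $\eye-R^*=U D U^\top$ with $U\in\mathbb R^{d_Y\times d_Q}$ and take output $U^\top(\ldots)$. Sufficiency of this output follows because the orthogonal components are pure noise independent of $Y$. The bound $d_Q\le\min(d_Y,\min_i d_i)$ follows since $\operatorname{rank}(\eye-C_i)\le\operatorname{rank}(\Sigma_{YX_i})\le d_i$.

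\textbf{Step 3: feasibility.} I must check that the reduced-dimension channel still satisfies the constraints; this is Theorem~\ref{thm:posterior-gaussian-blackwell} again. I must also handle singular $C_i$ and $R$, where the posteriors are degenerate Gaussians; Theorem~\ref{thm:posterior-gaussian-blackwell} allows $\succeq0$.

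The main obstacle is Step 1's Loewner inequality $R\succeq C_i$ for arbitrary non-Gaussian $Q$ together with the entropy bound. In particular, I must make sure that garbling implies the MMSE comparison without any Gaussianity of $Q$. I would argue this via the tower property: since $\mathbb E[\tilde Y\mid Q]$ is a function of $Q$, it is measurable with respect to $(X_i,\text{independent randomness})$. Because $Y\perp$ that randomness given $X_i$, the optimal estimator given $X_i$ is not improved by the extra randomness.
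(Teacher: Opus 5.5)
The gap is in Step~1, and it is not the Loewner inequality you flag. $R\succeq C_i$ is correct: it is the law-of-total-covariance argument of Lemma~\ref{lem:garbling-covariance}, averaged over $Q$. The failing step is the entropy bound. Since $I(Y;Q)=h(\tilde Y)-h(\tilde Y\mid Q)$, the maximum-entropy inequality $h(\tilde Y\mid Q)\le\frac12\ln((2\pi e)^{d_Y}|R|)$ gives $I(Y;Q)\ge-\frac12\ln|R|$. That is the opposite direction from what you need.

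This direction is exactly what the paper uses for union information, which is a minimization. For redundancy, a maximization, you need an upper bound on $I(Y;Q)$, i.e.\ a \emph{lower} bound on the posterior entropies. The error covariance cannot supply one.

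Your inequality is in fact false even for garblings of a Gaussian source. Take scalar $Y\sim\mathcal N(0,1)$, $X_1=Y+N$ with independent $N\sim\mathcal N(0,1)$, and $Q=|X_1|$. Then $P_{Y|Q=a}=\tfrac12\mathcal N(a/2,\tfrac12)+\tfrac12\mathcal N(-a/2,\tfrac12)$ has mean zero, so $R=1$ and $-\frac12\ln|R|=0$, yet $I(Y;Q)>0$. So replacing $Q$ by $Q_R$ can lose information. Step~1 therefore only proves the achievability half, $I_\cap\ge\max\{-\frac12\ln|R|:C_i\preceq R\preceq\eye\ \forall i\}$, not the converse.

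The missing idea is to build the feasible matrix from Fisher information rather than from covariance, as in Proposition~\ref{prop:posterior-gaussian-redundancy}. Set $\bar M_Q=\Sigma_Y^{1/2}\int\mathcal J(P_{Y|Q=q})\,P_Q(dq)\,\Sigma_Y^{1/2}$.
\begin{itemize}
\item Each posterior $P_{Y|Q=q}$ is a mixture of $\mathcal N(m_i(x_i),\Sigma_{Y|X_i})$, and matrix Fisher information is convex under mixtures. Hence $\mathcal J(P_{Y|Q=q})\preceq\Sigma_{Y|X_i}^{-1}$ and $\bar M_Q\preceq C_i^{-1}$.
\item The no-information constraint $\bar M_Q\succeq\eye$ no longer comes for free from total covariance. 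It follows by applying the same convexity to $P_Y=\int P_{Y|Q=q}\,P_Q(dq)$ and then Cram\'er--Rao, $\mathcal J(P_Y)\succeq\Sigma_Y^{-1}$.
\item Dembo's inequality $h(P)\ge\frac12\ln\bigl((2\pi e)^{d_Y}|\mathcal J(P)^{-1}|\bigr)$ lower-bounds each posterior entropy, and Jensen then gives $I(Y;Q)\le\frac12\ln|\bar M_Q|\le I_\cap^{\mathrm G}$.
\end{itemize}
Singular $C_i$, where these Fisher bounds degenerate, are handled by smoothing the whitened target, $Z_t=\sqrt{1-t}\,Z+\sqrt t\,G$, and using lower semicontinuity of mutual information. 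The case of singular $\sum_iC_i$ is treated directly.

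Your Steps~2--3 coincide with the paper and are fine once the converse is repaired: the range inclusion from $0\preceq\eye-R^*\preceq\eye-C_i$, the bound $\operatorname{rank}(\eye-C_i)\le d_i$, and feasibility of the canonical channel via Theorem~\ref{thm:posterior-gaussian-blackwell}.
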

Combining Theorems~\ref{thm:posterior-gaussian-blackwell} and \ref{thm:gaussian-optimality-redundancy} allows us to express redundancy as an optimization over covariance matrices. It is convenient to work with the inverse matrix
$M:=C_Q^{-1}$, which satisfies
$I(Y;Q)=\frac12\ln|M|$ and $M\succeq\eye$. The Blackwell constraint
$\Cmat{Q}\succeq C_i$ then reads $C_i^{1/2}MC_i^{1/2}\preceq\eye$, valid even when $C_i$ is singular. $I_\cap$ can then be computed by solving the convex matrix optimization problem:
\begin{equation}
I_{\cap}=\sup_{M\in\psd{d_Y}}\frac{1}{2}\ln\left|M\right|
\quad\where\quad \eye\preceq M,\qquad C_i^{1/2}MC_i^{1/2}\preceq\eye\quad\forall i\,.
\label{eq:bl2}
\end{equation}

When $\sum_i C_i\succ0$, Eq.~\eqref{eq:bl2} has a unique optimizer $\Mcap$ and the redundancy is finite.
The optimal redundant channel is given by
Eq.~\eqref{eq:canonical-posterior-channel}: 
\begin{align}\srcCh{Q^*_\cap}(\cdot|y)=\mathcal{N}\big((\eye-\Mcap^{-1})^{1/2}\Sigma^{-1/2}_{Y}(y-\mu_{Y}),\Mcap^{-1}\big).
\end{align}
This channel specifies the {content} of the redundant information.
If $\sum_i C_i$ is singular, then every source observes a
common target direction with infinite precision, so $I_\cap=+\infty$.

Recall the geometric interpretation of the Blackwell order as  ellipsoid
inclusion. From this interpretation, $I_\cap$ corresponds to the largest ellipsoid that encloses the
no-information ellipsoid $\mathcal{E}_\varnothing$ and is enclosed in every source ellipsoid 
(Fig.~\ref{fig:1}). This is the Gaussian analogue of set intersection.

When $Y$ is scalar, the $C_i$ are scalars and Eq.~\eqref{eq:bl2} gives the well-known ``minimum mutual
information'' (MMI) redundancy $I_\cap=\min_i
I(Y;X_i)$~\citep{barrett_exploration_2015}. This equivalence does not hold for the more interesting case of  multivariate targets, where different sources may provide information about different target directions.

\textbf{The no-information constraint.}\label{sec:not-a-lattice} The constraint $\eye\preceq M$ in Eq.~\eqref{eq:bl2} reflects the fact that no channel can produce more posterior uncertainty about the target than observing nothing. 
Somewhat surprisingly, this constraint is not implied by the source constraints. In terms of the geometric picture in Fig.~\ref{fig:1}, the largest ellipsoid contained in all source ellipsoids need not contain the no-information ellipsoid, even though each source ellipsoid does. This is related to the fact 
that neither the Blackwell order nor
the Loewner order is a lattice~\citep{bertschinger2014blackwell,kadison1951order}, so common lower bounds need not have a
greatest element.

Also, for two sources, Blackwell redundancy and union information need not satisfy the inclusion--exclusion formula $I_\cap=I(Y;X_1)+I(Y;X_2)-I_\cup$~\citep{kolchinskyNovelApproachPartial2022}. However, in Gaussian systems with $C_1,C_2\succ0$, this formula holds precisely when the no-information constraint $\eye\preceq M$ is non-binding, i.e., when dropping it does not change the optimum of Eq.~\eqref{eq:bl2}. In that case, the optimal matrices of the redundancy and union problems obey a matrix form of inclusion--exclusion,
\begin{equation}
\Mcap^{-1}+\Wcup=C_1+C_2 \,,
\label{eq:matrix-iep}
\end{equation}
where $\Wcup$ is the union optimizer of Sec.~\ref{sec:union}. In this case, the redundant channel can be read off from the union channel; see Sec.~\ref{app:two-source-iep} for details. 

\subsection{Union information and synergy}
\label{sec:union}

Next, we turn to Blackwell union information for multivariate Gaussians. We first show that, as for redundancy, it suffices to
optimize over Gaussian channels. Without loss of generality, we assume $\Sigma_{Y|X_i}\succ0$ for all $i$; if some $\Sigma_{Y|X_i}$ is singular, then $I_\cup\ge I(Y;X_i)=+\infty$. 
\begin{thm}
\label{thm:gaussian-optimality-union}
\thmUnionStatement{\label{eq:unqThm}}
\end{thm}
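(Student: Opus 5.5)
We have to show that the infimum in Eq.~\eqref{eq:red} over all channels $\srcCh{Q}$ with $\srcCh{X_i}\preceqY\srcCh{Q}$ for all $i$ is attained by a Gaussian channel with $d_Q$-dimensional output. The argument is a lower bound valid for every feasible channel, followed by a Gaussian channel attaining it. We work throughout in whitened coordinates. Without loss of generality $\mu_Y=0$ and $\Sigma_Y=\eye$, so $Y\sim\mathcal N(0,\eye)$.

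\textbf{Lower bound.} Let $\srcCh{Q}$ be any feasible channel, possibly non-Gaussian, and let $R:=\mathbb E[\mathrm{Cov}(Y|Q)]$, so that $0\preceq R\preceq\eye$.
\begin{itemize}
\item[(i)] Since $Y\sim\mathcal N(0,\eye)$, the maximum-entropy bound conditional on $Q$ plus concavity of $\ln|\cdot|$ give $h(Y|Q)\le\frac12\ln|2\pi e R|$. Hence $I(Y;Q)\ge-\frac12\ln|R|$.
\item[(ii)] Feasibility gives $R\preceq C_i$ for every $i$. Since $\srcCh{X_i}\preceqY\srcCh{Q}$, there is a garbling $\kappa_i$ with $Q\to\tilde X_i\to Y$ reversed appropriately: $X_i$ is obtained from $Q$ by $\kappa_i$, which yields the Markov chain $Y - Q - X_i$ in a coupling with the correct $(Y,X_i)$ marginal. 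The law of total covariance then gives $\mathbb E[\mathrm{Cov}(Y|X_i)]\succeq\mathbb E[\mathrm{Cov}(Y|Q)]$. For jointly Gaussian $(X_i,Y)$ the left side is $C_i$, so $R\preceq C_i$.
\end{itemize}
Hence every feasible channel satisfies $I(Y;Q)\ge\inf\{-\frac12\ln|R|: 0\preceq R\preceq C_i\ \forall i\}$. Conversely, by Theorem~\ref{thm:posterior-gaussian-blackwell}, every such $R\succ0$ is realized by the Gaussian channel $Q_R$ of Eq.~\eqref{eq:canonical-posterior-channel}, and this channel is feasible because $C_{Q_R}=R\preceq C_i$. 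Its information is exactly $-\frac12\ln|R|$. The infimum is attained: the objective is continuous, the feasible set is compact, and $R=\min$-type choices such as $R=\lambda_{\min}\eye$ show that optimizers have $R\succ0$. So the optimum over all channels equals the optimum over Gaussian channels.

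\textbf{Dimension reduction.} The channel $Q_R$ has output dimension $d_Y$ and mean map $(\eye-R)^{1/2}y$. Its informative part lives in $\operatorname{range}(\eye-R)$, and projecting onto this range gives an equivalent channel of dimension $\operatorname{rank}(\eye-R)$. It therefore suffices to find an optimizer $R^*$ with $\operatorname{range}(\eye-R^*)\subseteq S:=\operatorname{range}\sum_i(\eye-C_i)$. Note that $S^\perp=\bigcap_i\ker(\eye-C_i)$, and on $S^\perp$ we have $C_i z=z$ for all $i$.

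Take any optimizer $R$ and decompose it in the block structure $S\oplus S^\perp$. I would show that replacing $R$ by $R'=P_SRP_S+P_{S^\perp}$ keeps feasibility and does not increase $-\ln|R|$. The objective part follows from Fischer's inequality, $|R|\le|P_SRP_S|\,|P_{S^\perp}RP_{S^\perp}|\le|R'|$, using $R\preceq\eye$. Feasibility $R'\preceq C_i$ is the main obstacle. Because $S^\perp$ is an eigenspace of each $C_i$ with eigenvalue $1$, each $C_i$ is block diagonal, $C_i=C_i^S\oplus\eye$. Also $R\preceq C_i$ implies $P_SRP_S\preceq C_i^S$ by compression, which gives $R'\preceq C_i$. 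The bound $d_Q\le\min(d_Y,\sum_i d_i)$ then follows because $\operatorname{rank}(\eye-C_i)\le d_i$: indeed $\eye-C_i=\Sigma_Y^{-1/2}\Sigma_{YX_i}\Sigma_{X_i}^{-1}\Sigma_{X_iY}\Sigma_Y^{-1/2}$.

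Finally, I would double-check the direction of the garbling in step (ii). The inequality $\mathbb E\,\mathrm{Cov}(Y|X_i)\succeq\mathbb E\,\mathrm{Cov}(Y|Q)$ holds because $X_i$ is conditionally independent of $Y$ given $Q$ in the constructed coupling, so $\mathbb E[Y|X_i]=\mathbb E[\mathbb E[Y|Q]|X_i]$.
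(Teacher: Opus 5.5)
Your proposal is correct and follows essentially the same route as the paper. The ingredients match: the covariance inequality $\mathbb E\,\mathrm{Cov}(Y|Q)\preceq C_i$ via the garbling coupling and the law of total covariance (Lemma~\ref{lem:garbling-covariance}), the conditional max-entropy bound combined with concavity of $\ln|\cdot|$, attainment by the canonical channel $Q_R$, and the dimension reduction via the block form $C_i=C_i'\oplus\eye$ on $\mathcal U\oplus\mathcal U^\perp$ together with Fischer's inequality.

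The differences are minor. You replace an arbitrary optimizer by its block-diagonal compression instead of invoking uniqueness of $\Wcup$, which works equally well. There are also a few small points you should state explicitly, as the paper does:
\begin{itemize}
\item Channels with $I(Y;Q)=\infty$ satisfy the lower bound trivially, while finite information forces the posteriors to have densities, so $h(Y|Q)$ is defined.
\item A channel of rank below $d_Q$ is padded with independent noise to reach exactly $d_Q$ outputs.
\item $\Sigma_{X_i}^{-1}$ should be a pseudoinverse when $\Sigma_{X_i}$ is singular.
\end{itemize}
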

Writing $W:=\Cmat{Q}\equiv \Sigma_Y^{-1/2}\Sigma_{Y\vert Q}\Sigma_Y^{-1/2}$, the Blackwell constraint
$\srcCh{X_i}\preceqY\srcCh{Q}$ becomes $W\preceq C_i$, so union
information is given by the convex matrix optimization problem:
\begin{equation}
I_{\cup}=\min_{W\in\psd{d_Y}}-\frac{1}{2}\ln\left|W\right|
\quad\where\quad W\leL C_i\quad\forall i.
\label{eq:union-convex}
\end{equation}
The optimizer $\Wcup$ is unique. An
optimal union channel is obtained from
Eq.~\eqref{eq:canonical-posterior-channel}, now with $R=\Wcup$:
\begin{equation}
\srcCh{Q^*_\cup}(\cdot|y)=\mathcal{N}\big((\eye-\Wcup)^{1/2}
\Sigma^{-1/2}_{Y}(y-\mu_{Y}),\,\Wcup\big).
\label{eq:union-optimal-channel}
\end{equation}

{Eq.~\eqref{eq:union-convex} is the geometric dual of the redundancy problem. In terms of the geometric interpretation visualized in Fig.~\ref{fig:1}, union information corresponds to the smallest
ellipsoid that encloses all source ellipsoids}.  This may be interpreted as  the
Gaussian analogue of set union. 

Synergy is the information available from the joint
sources beyond the union channel,
$I_{\mathrm{syn}}=I(Y;X_1,\ldots,X_n)-I_\cup\ge0$. If the full joint distribution of $(Y,X_1,\dots,X_n)$ is
Gaussian, then $I_{\mathrm{syn}}=-\frac12\ln|\Cmat{X_1,\ldots,X_n}|-I_\cup$,
the difference between the log-volumes of the joint-source ellipsoid and
the union ellipsoid (Fig.~\ref{fig:1}).

Recall that Blackwell and BROJA union information/synergy are numerically equal. Using the coupling construction of
\citet[Appendix~G]{kolchinskyNovelApproachPartial2022} with Gaussian garbling kernels,
Theorem~\ref{thm:gaussian-optimality-union} also yields a jointly Gaussian optimizer for BROJA union information
with any number of sources. 
\citet{zhao2025partial} derived a similar
result in the case of two sources.  In addition, our results imply that the standard BROJA
optimization over high-dimensional joint distributions~\citep{venkatesh2023gaussian,zhao2025partial} can be replaced by an optimization over a
single $d_Y\times d_Y$ matrix.

Remarkably, for two sources, the Blackwell union problem can be solved in closed
form, removing the need for any optimization (see Sec.~\ref{app:two-source}).
\begin{thm}
\label{thm:two-source-union-closed-form}
\thmTwoSourceStatement{\label{eq:two-source-union}}
\end{thm}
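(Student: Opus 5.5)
The plan is to simultaneously diagonalize the two constraints and reduce Eq.~\eqref{eq:union-convex} to a problem whose optimum can be read off directly.

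\textbf{Change of variables.} Since $C_1\succ0$, the generalized eigenproblem $C_2V=C_1V\Lambda$ with normalization $V^\top C_1V=\eye$ has an invertible $V$, positive $\lambda_j$, and $V^\top C_2V=\Lambda$. Substitute $W=V^{-\top}ZV^{-1}$, i.e.\ $Z=V^\top WV$. Congruence by the invertible $V$ preserves the Loewner order, so the constraints $W\preceq C_1$ and $W\preceq C_2$ become $Z\preceq\eye$ and $Z\preceq\Lambda$. The objective transforms as
\[
-\tfrac12\ln|W| = -\tfrac12\ln|Z|+\ln|\det V| .
\]
The problem is therefore equivalent to maximizing $\ln|Z|$ subject to $Z\preceq\eye$, $Z\preceq\Lambda$, $Z\succeq0$.

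\textbf{Optimality of the diagonal candidate.} Let $D=\operatorname{diag}(\min(1,\lambda_j))$; it is clearly feasible. The constraints imply $Z_{jj}=e_j^\top Ze_j\le\min(1,\lambda_j)$. Hadamard's inequality for $Z\succeq0$ then gives
\[
|Z|\le\prod_j Z_{jj}\le\prod_j\min(1,\lambda_j)=|D| .
\]
So $D$ is optimal, and since the paper states the optimizer is unique (strict concavity of $\ln\det$), $\Wcup=V^{-\top}DV^{-1}$. I expect this to be the main conceptual step: noticing that a joint diagonalization plus Hadamard settles the non-lattice issue. This works for two sources only, because two quadratic forms can always be simultaneously diagonalized.

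\textbf{Evaluating the value.} From $V^\top C_1V=\eye$ we get $|V|^2=|C_1|^{-1}$. From $V^\top C_2 V=\Lambda$ we get $|C_2|=|C_1|\prod_j\lambda_j$, hence
\[
\sum_j\ln\lambda_j=\ln|C_2|-\ln|C_1|=2\bigl(I(Y;X_1)-I(Y;X_2)\bigr).
\]
Then
\[
I_\cup=-\tfrac12\sum_j\ln\min(1,\lambda_j)+\tfrac12\ln|V|^2=-\tfrac12\sum_j\min(0,\ln\lambda_j)+I(Y;X_1).
\]
Use the identity $\min(0,t)=\tfrac12(t-|t|)$. The right-hand side becomes
\[
I(Y;X_1)-\tfrac14\sum_j\ln\lambda_j+\tfrac14\sum_j|\ln\lambda_j| .
\]
Substituting the expression for $\sum_j\ln\lambda_j$ turns the first two terms into $\tfrac12\bigl(I(Y;X_1)+I(Y;X_2)\bigr)$, which yields Eq.~\eqref{eq:two-source-union}.
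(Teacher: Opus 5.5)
Your proposal is correct and follows the paper's proof essentially step for step: congruence by $V$ to turn the constraints into $\widehat W\preceq\eye$ and $\widehat W\preceq\Lambda$, the diagonal bound combined with Hadamard's inequality, and the identity $\ln\min(1,\lambda)=\tfrac12(\ln\lambda-|\ln\lambda|)$ to evaluate $I_\cup$. The only difference is that you invoke uniqueness of the optimizer, which is not needed: showing that $\operatorname{diag}(\min(1,\lambda_j))$ attains the Hadamard bound already establishes that $\Wcup$ solves the problem, and that is all the paper argues.
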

This also implies a closed-form solution for BROJA union/synergy in the case of two sources.

For a scalar target, Eq.~\eqref{eq:union-convex} gives
$I_\cup=\max_i I(Y;X_i)$, so union information is the information in the
strongest source $X_{i^*}$, and synergy is the corresponding conditional mutual information
$I(Y;X_1,\dots,X_n\vert X_{i^*})$~\citep{barrett_exploration_2015}. In the case of a multidimensional target, however, union information and synergy need not reduce to these simple mutual information expressions.




\section{Numerical optimization}
\label{sec:numerical}

\begin{figure}[b!]
\centering
\includegraphics[width=\textwidth]{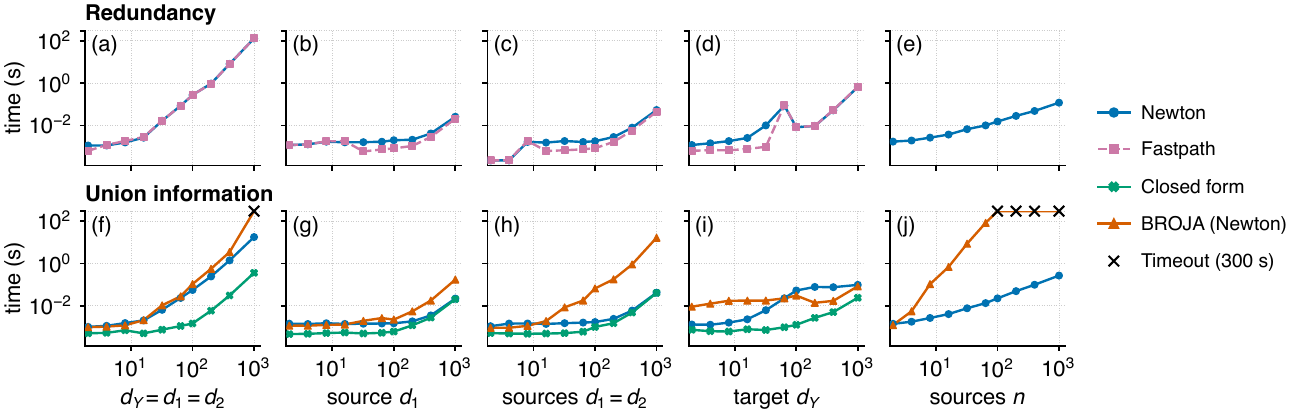}
\caption{\label{fig:pid-value-timing}\label{fig:redundancy-total-timing}\label{fig:union-total-timing}
Runtimes across five matched trials for redundancy (a)--(e) and union information (f)--(j).
Crosses mark a $300$-s timeout in at least one BROJA trial.
Methods: \textit{Newton}: optimization of Eqs.~\eqref{eq:bl2} and~\eqref{eq:union-convex} using Newton--PCG;
\textit{Fastpath}: two-source redundancy shortcut with Newton fallback;
\textit{Closed form}: two-source union formula from Theorem~\ref{thm:two-source-union-closed-form}.
\textit{BROJA (Newton)}: Newton--PCG implementation of the BROJA source-covariance
optimization~\citep{venkatesh2023gaussian,zhao2025partial}.
Panels (a,f) vary source and target dimensions together;
(b,g) vary the first source dimension $d_1$ ($d_Y=d_2=8$);
(c,h) vary both source dimensions $d_1=d_2$ ($d_Y=8$);
(d,i) vary the target dimension $d_Y$ ($d_1=d_2=64$);
(e,j) vary the source count $n$ ($d_Y=d_i=8$).
Details in Sec.~\ref{sm:sec:experiments}.
}
\end{figure}

\label{sec:numerical-algorithms}

We now describe efficient algorithms for computing the Gaussian Blackwell PID and evaluate their performance. See Sec.~\ref{sm:sec:optimization} for details. 

Gaussian redundancy and union, Eqs.~\eqref{eq:bl2}
and~\eqref{eq:union-convex}, are defined in terms of convex log-determinant problems in a
single $d_Y\times d_Y$ matrix, so they can be solved with off-the-shelf optimization packages such as CVXPY~\citep{diamond2016cvxpy}. However, we obtained better
performance by combining a Newton--PCG optimizer and log-barrier methods.
Our implementation uses JAX to just-in-time compile the full optimization loop.
It returns certified lower and upper bounds on the globally optimal information
value and stops when their gap is at most $\epsilon_{\rm abs}$.


For the algorithms below, assume $C_i\succ0$ after the preprocessing
described in Sec.~\ref{sm:sec:optimization}, which handles singular cases.
The size of our optimization problem is set by the
target dimension $d_Y$.  When the sources are
low-dimensional relative to the target, each source
observes only a low-dimensional subspace of the target and our optimization problems can be
reparameterized in terms of smaller matrices. Specifically, for
{redundancy}, the constraints $\eye\preceq M\preceq C_i^{-1}$ imply that 
$M-\eye\succeq0$ and that its range is contained in the subspace
$\bigcap_i\operatorname{range}(C_i^{-1}-\eye)$, whose dimension is the
$d_Q\le\min(d_Y,\min_i d_i)$ of Theorem~\ref{thm:gaussian-optimality-redundancy}.
Writing $P\in\mathbb{R}^{d_Y\times d_Q}$ for an orthonormal basis of this
subspace, every feasible matrix is $M=\eye+PKP^\top$ with $K\in\psd{d_Q}$,
and Eq.~\eqref{eq:bl2} becomes an optimization over smaller matrices $K$. 
For
{union}, write the whitened channels as $X_i=H_iY+N_i$, 
so that
source $i$ depends on the target only through the projection of
$Y$ onto the row space of $H_i$. Each source channel therefore depends on the target
only through the subspace $\operatorname{range}\big(\sum_iH_i^\top H_i\big)$, whose dimension is the $d_Q\le\min(d_Y,\sum_i d_i)$ of Theorem~\ref{thm:gaussian-optimality-union}. 
Directions orthogonal to this subspace can be
dropped: if $P_\cup\in\mathbb{R}^{d_Y\times d_Q}$ is an orthonormal basis of
the observed subspace, then $P_\cup^\top Y$ is a $d_Q$-dimensional 
target and $H_iY=H_iP_\cup\,(P_\cup^\top Y)$, so replacing each $H_i$ by
$H_iP_\cup$ gives a
$d_Q$-dimensional problem with the same union information. 

{In the case of two sources}, union and redundancy can often be computed faster by using  the generalized eigenvalues $\lambda_j$ of the pair
$(C_2,C_1)$. 
{To compute {union} information for two sources}, we evaluate the
closed form of Theorem~\ref{thm:two-source-union-closed-form},  $ I_\cup=-\frac12(\ln|C_1|+\sum\nolimits_j\ln\min(1,\lambda_j))$, 
which requires only the eigenvalues, with no eigenvectors and no
reconstruction of the optimal matrix $\Wcup$. 
 {To compute {redundancy} for two sources}, we use the logic of
Eq.~\eqref{eq:matrix-iep}: from the closed-form union optimizer $\Wcup$ of
Theorem~\ref{thm:two-source-union-closed-form}, we form the candidate
$R=C_1+C_2-\Wcup$. If $R\preceq\eye$, the no-information constraint is not
binding, so $\Mcap=R^{-1}$ (Sec.~\ref{app:two-source-iep}) and $I_\cap=I(Y;X_1)+I(Y;X_2)-I_\cup$. Otherwise,
we solve Eq.~\eqref{eq:bl2} with the Newton--PCG optimizer. 

\label{sec:numerical-experiments}

Figure~\ref{fig:pid-value-timing} shows runtimes on random Gaussian
channels $X_i=H_iY+N_i$, with dimensions and source counts up to $10^3$
and matched instances across methods.
For redundancy, we compare Newton and Fastpath. For union, existing
Gaussian BROJA solvers optimize over source-noise covariances, such as a
$d_1\times d_2$ cross-noise covariance for two
sources~\citep{venkatesh2023gaussian,zhao2025partial}, whereas we
optimize a single $d_Y\times d_Y$ matrix. We compare against BROJA (Newton),
which solves the source-space problem with the same Newton--PCG driver;
the Thin-PID optimizer of \citet{zhao2025partial} is included in
Fig.~\ref{sm:fig:pid-value-timing}.

Our optimizer provides a certified solution that is within 
$\epsilon_{\rm abs}=0.01$ nats of global optimality. In contrast, the competing methods do not
compute an optimality certificate. We run them until their union
objective reaches our certified upper bound on the same
instance. 
(This gives them an advantage, since they are not required to certify their results.)
\section{Example: optimal control}
\label{sec:optimal-control}

We now illustrate our approach using a simple optimal control model. Details are found in Sec.~\ref{app:optimal-control}.

We consider a 
linear--quadratic--Gaussian (LQG) controller 
 that confines a two-dimensional particle $\mathbf r \in\mathbb R^2$. The particle diffuses as  $ d\mathbf r=(J\mathbf r+\mathbf u)\,dt+d\mathbf W$, where $\mathbf u$ is the applied control and $J$ represents an elliptical trap with rotational drift. At each sampling
interval $\Delta t$, the controller makes a noisy position measurement $\mathbf s_t=\mathbf r_t+\mathbf n_t$, where $\mathbf n_t \sim \mathcal{N}(0,\nu S)$ for  an anisotropic covariance matrix $S$ and noise scale $\nu$. This measurement is combined with the memory $\mathbf m_t$, the
Kalman prediction of the current position from earlier measurements, to compute the applied control $\mathbf u_t=-F\big[L\mathbf s_t+(I-L)\mathbf m_t\big]+\boldsymbol\eta_t$, 
where $L$ is the Kalman gain, $F$ is the feedback gain, and
$\boldsymbol\eta_t$ is actuator noise. The memory is then updated.

\begin{wrapfigure}{R}{0.48\textwidth}
\vspace{-10pt}
 \centering
 \includegraphics[width=\linewidth]{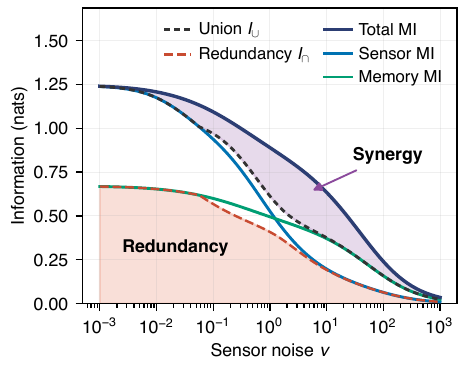}
 \caption{PID analysis of how the applied control (target) integrates two sources, sensor measurements and memory, as a function of sensor noise
 $\nu$. Shading shows
 redundancy $I_\cap$ and synergy $I(Y;X_1,X_2)-I_\cup$. For details, see Sec.~\ref{app:optimal-control}.}
 \label{ctrl:figure}
 \vspace{-10pt}
\end{wrapfigure}

To study information processing in the controller, we consider the current sensor reading $X_1=\mathbf s_t$ and the memory
$X_2=\mathbf m_t$ as two sources; we consider the control force $Y=\mathbf u_t$ as
the target. Mutual information under the steady-state Gaussian distribution quantifies how informative each source (or both
together) is about the control, but not if and how the controller combines them in computing $\mathbf u$.

To make progress, we apply the Blackwell PID framework. In Figure~\ref{ctrl:figure}, we sweep the sensor noise strength $\nu$ at a fixed sampling interval,
recomputing the optimal controller and the Blackwell PID quantities. 
At low sensor noise, the controller mainly uses the current sensor measurement. Memory, which propagates earlier
readings through the known dynamics, predicts the current sensor reading well but adds little further information. Memory is then Blackwell less informative than
the sensor, its information is entirely redundant, and synergy is negligible. As $\nu$ grows, the sensor becomes noisy and the controller shifts weight toward memory. The two sources now carry different information:
the sensor captures fluctuations that memory cannot anticipate, while
memory contains evidence accumulated over earlier measurements. Because the control depends on the
combination of the two, part of the information about it is available
only jointly. Redundancy falls and synergy rises, peaking at $0.31$ nats
($39\%$ of the total) near $\nu\simeq2.6$. 
At high sensor noise, the sensor is nearly ignored and
the control follows memory. The sensor is now Blackwell less informative
than memory, so its information is entirely redundant and the remainder
of the union information is unique to memory.

\section{Beyond Gaussians}
\label{sec:beyond-gaussians}

In this paper, we focus on multivariate Gaussians, the fundamental starting point for continuous systems.
Extensions to non-Gaussian systems are a natural next direction. As we note here, some
of our results already apply in this setting. (See Secs.~\ref{app:redundancy-optimality} and~\ref{app:union-optimality} for more details.)

Theorem~\ref{thm:posterior-gaussian-blackwell}, which relates Blackwell and matrix orders, does not require Gaussian
sources or targets, only that the target can be predicted from each source
with additive homoscedastic Gaussian noise:
\begin{equation}
Y=m_i(X_i)+Z_i,\qquad Z_i\sim\mathcal N(0,\Sigma_{Y|X_i}),\qquad Z_i\perp X_i
\quad\forall i,
\label{eq:posterior-gaussian-assumption}
\end{equation}
where $m_i$ may be nonlinear. For the formulas below, assume finite target
covariance $\Sigma_Y\succ0$ and $\Sigma_{Y|X_i}\succ0$ for every $i$.
Let $I_\cap^{\mathrm G}$ and $I_\cup^{\mathrm G}$ denote the values of the
Gaussian programs~\eqref{eq:bl2} and~\eqref{eq:union-convex}, evaluated using
the normalized conditional covariances $C_i=\Sigma_Y^{-1/2}\Sigma_{Y|X_i}\Sigma_Y^{-1/2}$.


Let $\tilde Y$ be Gaussian with the same mean and covariance as $Y$, and let the KL divergence 
$D(P_Y\Vert P_{\tilde Y})\ge0$ quantify the target's
deviation from Gaussianity. Blackwell redundancy obeys the upper bound
$I_\cap\le I_\cap^{\mathrm G}-D(P_Y\Vert P_{\tilde Y})\le I_\cap^{\mathrm G}$.
Blackwell union information extends directly:
$I_\cup=I_\cup^{\mathrm G}-D(P_Y\Vert P_{\tilde Y})\le I_\cup^{\mathrm G}$. As usual, synergy can then be computed as
$I_{\mathrm{syn}}=I(Y;X_1,\dots,X_n)-I_\cup$.
Finally, if $Y$ is Gaussian, then $D(P_Y\Vert P_{\tilde Y})=0$, and redundancy and union
information are given exactly by $I_\cap^{\mathrm G}$ and $I_\cup^{\mathrm G}$,
even if the sources do not have Gaussian marginals.


A complementary and more general approach, which we do not pursue here, is to learn invertible transformations into a Gaussian latent space, as in Flow-PID~\citep{zhao2025partial}.
Separate invertible transformations of the sources and the target preserve
mutual information and Blackwell comparisons, so our redundancy and union
programs apply exactly if the transformed source--target pairs are Gaussian in the latent space; otherwise, they provide approximations. PIDReg~\citep{ma2025explainable}
also applies Gaussian PID to learned representations, without requiring
invertible encoders.



\section{Discussion}
\label{sec:discussion}

We have shown that the Blackwell PID admits an exact and tractable
characterization for arbitrarily many Gaussian sources and target
dimensions, leading to a simple geometric interpretation and efficient
numerical methods. This provides a
principled PID framework for characterizing how information is encoded across
distributed representations. 

We are also able to identify optimal channels that 
extract the content of redundant and union information. 
Although we have not explored this here, such analyses could be useful in various data-driven applications. For example, in high-dimensional neuroimaging
data~\citep{luppi2024information}, one may consider redundant modes shared
across brain regions or timepoints. 


Another direction is to study how
the organization of information changes during learning. Deep linear networks~\citep{saxe2013exact} provide a controlled setting in
which to follow redundancy and synergy across training
trajectories, and the optimal channels could reveal whether learning
causes different units to represent the same target features or to
specialize in different ones. 

Finally, it is interesting to explore the use of our 
PID quantities as components of learning
objectives. Penalizing redundancy
between groups of units or modalities would push them to encode different
target features, favoring disentangled or complementary representations,
whereas rewarding redundancy would favor robustness to the loss of
individual units. Similarly, penalizing synergy would favor representations whose information
can be read out from individual components.  We also note that PID-based objectives have been proposed as local goal functions
for individual neurons~\citep{wibral2017partial,makkeh2025general,schneider2025what};
our approach could be used to extend such objectives to continuous, multivariate
representations.


\section*{Reproducibility statement}
All proofs are given in the Supplementary Material, Sec.~\ref{sec:sm}. 
The optimal-control model is described in Sec.~\ref{app:optimal-control}. 
Details of the optimization algorithm, including the Newton--PCG solver, are provided in
Sec.~\ref{sm:sec:optimization}. 
\ifdoubleblind
The solver and timing benchmark code will be released upon publication,
together with code reproducing the remaining figures and tables.
\else
The solver and
current timing benchmark code are available in the
\href{https://github.com/artemyk/gaussian-pid}{Gaussian-PID repository}.
Code reproducing the remaining figures and tables will be released with
the paper.
\fi

\section*{AI use statement}
AI assistance, including OpenAI Codex and Anthropic Claude, was used to assist with mathematical derivations, implementation of numerical solvers, generation of figures, and manuscript editing. 
The final version of the manuscript, including all mathematical proofs, was verified by the \ifdoubleblind authors\else author\fi. 
Responsibility for all scientific claims, results, and text rests with the \ifdoubleblind authors\else author\fi.

\bibliographystyle{iclr2027_conference}
\bibliography{writeup,setup-references}
\clearpage
\appendix
\begin{center}
{\Large\bf Supplementary Material}
\end{center}
\vspace{0.5\baselineskip}
\setcounter{section}{0}
\renewcommand{\thesection}{S\arabic{section}}
\renewcommand{\theHsection}{SM.\arabic{section}}
\setcounter{equation}{0}
\setcounter{figure}{0}
\setcounter{table}{0}
\renewcommand{\theequation}{S\arabic{equation}}
\renewcommand{\thefigure}{S\arabic{figure}}
\renewcommand{\thetable}{S\arabic{table}}
\renewcommand{\theHequation}{SM.\arabic{equation}}
\renewcommand{\theHfigure}{SM.\arabic{figure}}
\renewcommand{\theHtable}{SM.\arabic{table}}
\section{Theoretical results}
\label{sec:sm}
\subsection{Gaussian posteriors and Blackwell order}
\label{app:gaussian-comparison}

Before proceeding to our proof of Theorem~\ref{thm:posterior-gaussian-blackwell}, we introduce a helpful lemma.
Here and below, $\Sigma_{Y|B=b}$ denotes the covariance of $P_{Y|B=b}$, and
$\Sigma_{Y|B}:=\int\Sigma_{Y|B=b}\,P_B(db)$ its average.

\begin{lem}
\label{lem:garbling-covariance}
Suppose $\srcCh{A}\preceqY\srcCh{B}$, the posteriors of $Y$ given
$A$ and $B$ have finite second moments almost surely, and
$\Sigma_{Y|A=a}=\Sigma_{Y|A}$ is constant. Then
$\Sigma_{Y|A}\succeq\Sigma_{Y|B}$.
\end{lem}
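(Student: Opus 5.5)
The plan is to realize the garbling $\kappa$ on a common probability space and compare conditional covariances through the law of total variance. Since $\srcCh{A}\preceqY\srcCh{B}$, we can construct a triple $(Y,B,A)$ with $Y\to B\to A$ a Markov chain. First draw $(Y,B)$ from their joint law, then draw $A\sim\kappa(\cdot\mid B)$ independently of $Y$ given $B$. By the definition of the Blackwell order, the pair $(Y,A)$ then has the correct joint distribution. The objects $\Sigma_{Y|A=a}$ and $\Sigma_{Y|A}$ depend only on this joint law, so we may work with this coupling throughout.

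The key step is the conditional law of total variance with respect to the finer $\sigma$-algebra generated by $(A,B)$. By the Markov property, $P_{Y|A,B}=P_{Y|B}$ almost surely, so $\mathrm{Cov}(Y\mid A,B)=\Sigma_{Y|B=B}$ and $\mathbb E[Y\mid A,B]=\mathbb E[Y\mid B]$. Conditioning first on $A$ then gives, for $P_A$-a.e.\ $a$,
\begin{equation*}
\Sigma_{Y|A=a}=\mathbb E\big[\Sigma_{Y|B=B}\,\big|\,A=a\big]+\mathrm{Cov}\big(\mathbb E[Y\mid B]\,\big|\,A=a\big)\succeq \mathbb E\big[\Sigma_{Y|B=B}\,\big|\,A=a\big].
\end{equation*}
The covariance term is positive semidefinite, which gives the inequality.

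Next we average over $a\sim P_A$. The left-hand side is constant and equals $\Sigma_{Y|A}$. The right-hand side averages, by the tower property, to $\mathbb E[\Sigma_{Y|B=B}]=\Sigma_{Y|B}$. Hence $\Sigma_{Y|A}\succeq\Sigma_{Y|B}$.

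We do not actually need constancy of $\Sigma_{Y|A=a}$ beyond identifying the average with the constant. The main obstacle is justifying the total-variance identity when only the posteriors, not $Y$ itself, are assumed to have finite second moments.

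To handle this, we work with fixed directions: for each $z\in\mathbb R^{d_Y}$ we prove the scalar inequality $z^\top\Sigma_{Y|A=a}z\ge \mathbb E[z^\top\Sigma_{Y|B=B}z\mid A=a]$.
\begin{itemize}
\item Under $P_{Y|A=a}$, the variable $z^\top Y$ is square integrable by hypothesis.
\item Disintegrating $P_{Y|A=a}=\int P_{Y|B=b}\,P_{B|A=a}(db)$, the second moment $\int\!\!\int (z^\top y)^2\,P_{Y|B=b}(dy)\,P_{B|A=a}(db)$ is finite.
\item Applying Jensen's inequality to the inner measure, or equivalently expanding the square around $\mathbb E[z^\top Y\mid B=b]$, gives the scalar total-variance inequality with every term finite.
\end{itemize}
Taking $P_A$-expectations with monotone convergence, and then quantifying over all $z$, yields the Loewner inequality.
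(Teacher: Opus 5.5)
Your proposal is correct and follows the paper's proof: couple $(Y,B,A)$ through the garbling kernel so that $Y - B - A$ is Markov, apply the conditional law of total covariance given $A=a$, drop the positive semidefinite term $\operatorname{Cov}(\mathbb E[Y\mid B]\mid A=a)$, and average over $A$. Your direction-by-direction integrability check is extra rigor the paper leaves implicit; the final averaging is really Tonelli rather than monotone convergence, and the check shows the $B$-posteriors inherit finite second moments from the disintegration of $P_{Y|A=a}$, so the second-moment hypothesis on $B$ is redundant.
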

\begin{proof}
Couple $(Y,B,A)$ through the garbling kernel so that $Y-B-A$ is a Markov
chain, and write $m_B(b)=\int y\,P_{Y|B=b}(dy)$. For $P_A$-almost every
$a$, the conditional law of total covariance gives
\begin{equation}
\Sigma_{Y|A}
=\int\Sigma_{Y|B=b}\,P_{B|A=a}(db)
+\operatorname{Cov}(m_B(B)|A=a)
\succeq\int\Sigma_{Y|B=b}\,P_{B|A=a}(db).
\end{equation}
Averaging over $A$ proves the claim.
\end{proof}

\noindent\textbf{Theorem~\ref{thm:posterior-gaussian-blackwell} (restated).}
\begin{itshape}
\thmBlackwellStatement{}
\end{itshape}

\begin{proof}
Suppose first that $\srcCh{A}\preceqY\srcCh{B}$. The Gaussian posteriors
have finite second moments and constant covariances, so
Lemma~\ref{lem:garbling-covariance} gives $\Sigma_{Y|A}\succeq \Sigma_{Y|B}$.

Conversely, suppose $D:=\Sigma_{Y|A}-\Sigma_{Y|B}\succeq0$, and let
$V_A=m_A(A)$ and $V_B=m_B(B)$. 
Because
$P_{Y|A=a}$ depends on $a$ only through $m_A(a)$, we have the Markov chain $Y - V_A - A$ and so $\srcCh{A}\preceqY \srcCh{V_A}$.
Since $V_B$ is a function of $B$,
we also have $P_{V_B|Y}\preceqY\srcCh{B}$. 
It therefore suffices to show
$P_{V_A|Y}\preceqY P_{V_B|Y}$. The two posterior models give the same
target distribution,
\begin{equation}
P_{V_A}*\mathcal N(0,\Sigma_{Y|A})=P_Y=P_{V_B}*\mathcal N(0,\Sigma_{Y|B}).
\end{equation}
For a distribution $P$ on $\mathbb R^{d_Y}$, let
$\widehat P(t):=\int e^{\mathrm{i}t^\top v}\,P(dv)$ denote its
characteristic function. The
characteristic function of a convolution is the product of the
characteristic functions, and that of $\mathcal N(0,\Sigma)$ is
$e^{-\frac12t^\top\Sigma t}$. Hence, for every $t\in\mathbb R^{d_Y}$,
\begin{equation}
\widehat P_{V_A}(t)\,e^{-\frac12t^\top\Sigma_{Y|A}t}
=\widehat P_{V_B}(t)\,e^{-\frac12t^\top\Sigma_{Y|B}t}.
\end{equation}
Dividing by the strictly positive factor $e^{-\frac12t^\top\Sigma_{Y|B}t}$ gives $\widehat P_{V_B}(t)=\widehat P_{V_A}(t)\,e^{-\frac12t^\top D t}$. 
The right-hand side is the characteristic function of
the distribution $P_{V_A}*\mathcal N(0,D)$. 
A distribution is uniquely determined by its characteristic
function, so $P_{V_B}=P_{V_A}*\mathcal N(0,D)$.
Now draw independent $V_A\sim P_{V_A}$, $U\sim\mathcal N(0,D)$, and
$Z\sim\mathcal N(0,\Sigma_{Y|B})$, and set $V_B=V_A+U$ and $Y=V_B+Z$. This
reproduces the joint distributions of $(Y,V_A)$ and $(Y,V_B)$, since
$U+Z\sim\mathcal N(0,\Sigma_{Y|A})$, and forms the Markov chain
$Y- V_B- V_A$, so $P_{V_A|Y}\preceqY P_{V_B|Y}$. Hence
$\srcCh{A}\preceqY P_{V_A|Y}\preceqY P_{V_B|Y}\preceqY\srcCh{B}$.
\end{proof}

\paragraph{Gaussian targets and canonical channels.}
In the proofs of Theorems~\ref{thm:gaussian-optimality-redundancy}--\ref{thm:two-source-union-closed-form}, assume that
\(Y\in\mathbb{R}^{d_Y}\) is Gaussian with mean \(\mu_Y\) and
positive-definite covariance \(\Sigma_Y\succ0\). We use the normalized posterior covariances
$\Cmat{A}$ and $C_i:=\Cmat{X_i}$ of Sec.~\ref{sec:gaussian-pid}, allowing
degenerate channels, with the convention $-\ln|C|=+\infty$ for singular
positive-semidefinite $C$.

For the canonical channel~\eqref{eq:canonical-posterior-channel}, the
whitened target $\Sigma_Y^{-1/2}(Y-\mu_Y)$ and $Q_R$ are jointly Gaussian with
$\operatorname{Cov}(Q_R)=\eye$ and cross-covariance $(\eye-R)^{1/2}$, so
$\Cmat{Q_R}=\eye-(\eye-R)=R$ and $I(Y;Q_R)=-\frac12\ln|R|$, which is
$+\infty$ when $R$ is singular (directions in $\ker R$ are observed without
noise). Output coordinates in $\ker(\eye-R)$ are pure noise. Discarding
them, or padding with independent noise, gives an equivalent channel of
any output dimension $d\ge\operatorname{rank}(\eye-R)$.

\subsection{Gaussian optimality of redundancy}
\label{app:redundancy-optimality}

We first prove an upper bound under the more general posterior assumption
of Sec.~\ref{sec:beyond-gaussians}.
\begin{prop}
\label{prop:posterior-gaussian-redundancy}
Suppose Eq.~\eqref{eq:posterior-gaussian-assumption} holds with 
$\Sigma_Y\succ0$. Set $C_i=\Sigma_Y^{-1/2}\Sigma_{Y|X_i}\Sigma_Y^{-1/2}$, and let
$I_\cap^{\mathrm G}$ be the optimal value from the right-hand side
of Eq.~\eqref{eq:bl2}. 
Then $I_\cap\le I_\cap^{\mathrm G}$. Also, if every $\Sigma_{Y|X_i}\succ0$, 
\begin{align*}
I_\cap\le I_\cap^{\mathrm G}-D(P_Y\Vert P_{\tilde Y})\,,
\end{align*} where $\tilde Y$ is a Gaussian with the same mean
and covariance as $Y$.
\end{prop}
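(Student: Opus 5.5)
Fix any channel $\srcCh{Q}$ with $\srcCh{Q}\preceqY\srcCh{X_i}$ for all $i$. I will show $I(Y;Q)\le I_\cap^{\mathrm G}$, and, when every $\Sigma_{Y|X_i}\succ0$, $I(Y;Q)\le I_\cap^{\mathrm G}-D(P_Y\Vert P_{\tilde Y})$. Taking the supremum over $Q$ then proves the proposition. We may assume $I(Y;Q)<\infty$.

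\textbf{Step 1 (matrix constraints).} Let $\Sigma_{Y|Q}:=\int\Sigma_{Y|Q=q}\,P_Q(dq)$ be the averaged posterior covariance; it is finite because $\Sigma_Y$ is. Each source posterior $P_{Y|X_i=x}=\mathcal N(m_i(x),\Sigma_{Y|X_i})$ has constant covariance, so Lemma~\ref{lem:garbling-covariance} applies with $A=X_i$ and $B=Q$ and gives $\Sigma_{Y|X_i}\succeq\Sigma_{Y|Q}$. The law of total covariance gives $\Sigma_{Y|Q}\preceq\Sigma_Y$. Set $R:=\Sigma_Y^{-1/2}\Sigma_{Y|Q}\Sigma_Y^{-1/2}$; then $C_i\succeq R$ for all $i$ and $R\preceq\eye$. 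If $R\succ0$, then $M=R^{-1}$ is feasible for Eq.~\eqref{eq:bl2}: $M\succeq\eye$, and $R\succeq C_i$ implies $C_i^{1/2}R^{-1}C_i^{1/2}\preceq\eye$. Hence $\frac12\ln|R^{-1}|\le I_\cap^{\mathrm G}$. If $R$ is singular, I will argue via a limit ($R+\epsilon\eye$ after rescaling), or note that then $\frac12\ln|R^{-1}|=+\infty$ while $I_\cap^{\mathrm G}=+\infty$ too, since $\sum_iC_i$ must be singular in that case.

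\textbf{Step 2 (entropy bound).} Write $I(Y;Q)=h(Y)-h(Y|Q)$. If $Y$ has no density, either $I$ is still controlled through discretization or $D(P_Y\Vert P_{\tilde Y})=\infty$ and the bound is trivial. Two standard facts give the estimate. First, $h(Y)=h(\tilde Y)-D(P_Y\Vert P_{\tilde Y})$, because $\tilde Y$ is Gaussian with matching first two moments, so $-\int p\ln\tilde p=h(\tilde Y)$. Second, by the Gaussian maximum-entropy bound applied to each posterior and then concavity of $\ln\det$ (Jensen), $h(Y|Q)\le\int\frac12\ln\bigl((2\pi e)^{d_Y}|\Sigma_{Y|Q=q}|\bigr)\,P_Q(dq)\le\frac12\ln\bigl((2\pi e)^{d_Y}|\Sigma_{Y|Q}|\bigr)$. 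Together these give
\[
I(Y;Q)\le\tfrac12\ln\frac{|\Sigma_Y|}{|\Sigma_{Y|Q}|}-D(P_Y\Vert P_{\tilde Y})=\tfrac12\ln|R^{-1}|-D(P_Y\Vert P_{\tilde Y})\le I_\cap^{\mathrm G}-D(P_Y\Vert P_{\tilde Y}).
\]
Since $D\ge0$, the first claim $I_\cap\le I_\cap^{\mathrm G}$ follows whenever $h(Y)$ is finite.

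\textbf{Main obstacle.} The difficulty is technical rigor when densities fail to exist or $\Sigma_{Y|X_i}$ is singular, which is why the first claim carries no positivity assumption. When every $\Sigma_{Y|X_i}\succ0$, $Y=m_i(X_i)+Z_i$ has a density with finite entropy, so the decomposition above is valid; it remains to check that $h(Y|Q)$ is well defined, or to replace it by a $\liminf$ over discretizations of $Q$. In the degenerate case I would handle the first claim by a perturbation. Add independent noise $\mathcal N(0,\epsilon\Sigma_Y)$ to $Y$ as a common garbling of all channels, then let $\epsilon\to0$. This uses lower semicontinuity of mutual information and continuity of the optimal value of Eq.~\eqref{eq:bl2} in the $C_i$. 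I expect this limiting argument to be the hardest part, and I would try it first.
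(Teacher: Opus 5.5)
\textbf{Gap in Step~2.} The inequalities point the wrong way, and the inequality you need is false, so this cannot be fixed by adjusting the bookkeeping. To upper-bound $I(Y;Q)=h(Y)-h(Y|Q)$ you need a \emph{lower} bound on $h(Y|Q)$. The Gaussian maximum-entropy bound $h(P_{Y|Q=q})\le\frac12\ln\bigl((2\pi e)^{d_Y}|\Sigma_{Y|Q=q}|\bigr)$ is an upper bound, and Jensen applied to $\ln|\Sigma_{Y|Q=q}|$ pushes the same way. Your chain therefore proves $I(Y;Q)\ge\frac12\ln|R^{-1}|-D(P_Y\Vert P_{\tilde Y})$. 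That is the paper's argument for the \emph{union} lower bound (Proposition~\ref{prop:posterior-gaussian-union}), not for redundancy.

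The inequality you need, $I(Y;Q)\le\frac12\ln|R^{-1}|$, fails for non-Gaussian garblings. Take $Y\sim\mathcal N(0,1)$, a single source $X=Y+\sigma\xi$, and $Q=\operatorname{sign}(X)$, which is a garbling of $X$. As $\sigma\to0$, $I(Y;Q)\to\ln2\approx0.69$. Meanwhile $R=1-\frac{2}{\pi(1+\sigma^2)}\to1-\frac2\pi$, so $\frac12\ln R^{-1}\to\frac12\ln\frac{\pi}{\pi-2}\approx0.51$. The averaged posterior covariance certifies feasibility in Eq.~\eqref{eq:bl2}, but it does not bound the channel's information.

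\textbf{Missing idea.} Replace the covariance by Fisher information. This gives a matrix that is both feasible and whose log-determinant upper-bounds the information.
\begin{itemize}
\item For every $i$, each posterior $P_{Y|Q=q}$ is a mixture of the Gaussians $\mathcal N(m_i(x_i),\Sigma_{Y|X_i})$ over $P_{X_i|Q=q}$. Convexity of the Fisher information matrix under mixtures gives $\mathcal J(P_{Y|Q=q})\preceq\Sigma_{Y|X_i}^{-1}$.
\item The same convexity applied to $P_Y=\int P_{Y|Q=q}\,P_Q(dq)$, plus Cram\'er--Rao, gives $\int\mathcal J(P_{Y|Q=q})\,P_Q(dq)\succeq\Sigma_Y^{-1}$.
\item Hence $\bar M_Q=\Sigma_Y^{1/2}\int\mathcal J(P_{Y|Q=q})\,P_Q(dq)\,\Sigma_Y^{1/2}$ is feasible.
\item Dembo's inequality $h(P)\ge\frac12\ln\bigl((2\pi e)^{d_Y}|\mathcal J(P)^{-1}|\bigr)$ supplies the lower bound on conditional entropy.
\item Jensen applied to the Fisher matrices, where concavity of $\ln|\cdot|$ now works in your favor, gives $I(Y;Q)+D(P_Y\Vert P_{\tilde Y})\le\frac12\ln|\bar M_Q|\le I_\cap^{\mathrm G}$.
\end{itemize}

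\textbf{Smaller points.}
\begin{itemize}
\item In Step~1, Lemma~\ref{lem:garbling-covariance} with $A=X_i$, $B=Q$ would require $X_i$ to be a garbling of $Q$. It yields the wrong inequality $\Sigma_{Y|X_i}\succeq\Sigma_{Y|Q}$, and you then use the opposite one. The correct $\Sigma_{Y|Q}\succeq\Sigma_{Y|X_i}$ follows from the law of total covariance along $Y-X_i-Q$.
\item Your smoothing plan for singular $\Sigma_{Y|X_i}$ matches the paper's. You do not need continuity of the optimal value, though. Since $C_i(t)=(1-t)C_i+t\eye\succeq C_i$, the smoothed feasible set is contained in the original one, and lower semicontinuity of mutual information finishes the argument.
\end{itemize}
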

\begin{proof}
The law of total covariance gives $\Sigma_{Y|X_i}\preceq \Sigma_Y$, so $C_i\preceq\eye$.
For $M\succeq\eye$, 
let $E=M^{1/2}C_i^{1/2}$. The matrices 
$E^\top E=C_i^{1/2}MC_i^{1/2}$ and $EE^\top=M^{1/2}C_iM^{1/2}$ have the
same nonzero eigenvalues, so $E^\top E\preceq\eye$ iff $EE^\top\preceq\eye$. Therefore,
\begin{equation}
C_i^{1/2}MC_i^{1/2}\preceq\eye
\quad\Longleftrightarrow\quad
M^{1/2}C_iM^{1/2}\preceq\eye
\quad\Longleftrightarrow\quad C_i\preceq M^{-1}.
\label{eq:app-red-inverse-constraint}
\end{equation}

\paragraph{Positive-definite posterior covariances.}
First suppose every $\Sigma_{Y|X_i}\succ0$. For a probability distribution $P$ on
$\mathbb R^{d_Y}$ with smooth density $p$, write its Fisher information
matrix as
\begin{equation}
\mathcal J(P):=\int\nabla\ln p(y)\nabla\ln p(y)^\top\,P(dy).
\label{eq:appdeffim}
\end{equation}
Let $Q$ be any common lower channel, not necessarily Gaussian. Coupling $Q$
to each $X_i$ through its garbling kernel shows that, for $P_Q$-almost
every $q$, the posterior is a mixture of Gaussian posteriors,
\begin{equation}
\PYQ{q}=\int\mathcal N(m_i(x_i),\Sigma_{Y|X_i})\,P_{X_i|Q=q}(dx_i).
\end{equation}
These Gaussian convolutions have smooth densities. Define
\begin{equation}
\bar M_Q:=\Sigma_Y^{1/2}\int\mathcal J(\PYQ{q})\,P_Q(dq)\,\Sigma_Y^{1/2}.
\label{eq:barMb}
\end{equation}
Since each
Gaussian component has Fisher information matrix $\Sigma_{Y|X_i}^{-1}$, and  the Fisher information matrix is convex under mixtures
\citep[Prop.~4]{eskenazis2024entropy}, we have 
$\mathcal J(\PYQ{q})\preceq \Sigma_{Y|X_i}^{-1}$ for every $i$, hence $\bar M_Q\preceq \Sigma_Y^{1/2}\Sigma_{Y|X_i}^{-1}\Sigma_Y^{1/2}=C_i^{-1}$.

Applying the same inequality to $P_Y=\int\PYQ{q}\,P_Q(dq)$, followed
by the Cram\'er--Rao inequality~\citep[Thm.~20]{dembo1991information}, gives
\begin{equation}
\int\mathcal J(\PYQ{q})\,P_Q(dq)
\succeq\mathcal J(P_Y)\succeq \Sigma_Y^{-1}.
\end{equation}
Plugging into Eq.~\eqref{eq:barMb} gives $\eye=\Sigma_Y^{1/2}\Sigma_Y^{-1}\Sigma_Y^{1/2}
\preceq\bar M_Q$. 
Together, $\eye\preceq\bar M_Q\preceq C_i^{-1}$, so $C_i^{1/2}\bar M_QC_i^{1/2}\preceq\eye$ and $\bar M_Q$ is feasible
for the matrix program defining $I_\cap^{\mathrm G}$.

It remains to bound the information of $Q$. Gaussian convolution and
finite covariance ensure that $h(P_Y)$ and $D(P_Y\Vert P_{\tilde Y})$ are finite, with
$D(P_Y\Vert P_{\tilde Y})=h(P_{\tilde Y})-h(P_Y)$.
Data processing gives $I(Y;Q)\le I(Y;X_i)<\infty$, so $\int h(\PYQ{q})\,P_Q(dq)$ is
finite as well. Dembo's entropy--Fisher-information inequality
\citep[Eq.~2.36]{dembo1990information} gives
\begin{equation}
h(\PYQ{q})\ge
\frac12\ln\!\left((2\pi e)^{d_Y}|\mathcal J(\PYQ{q})^{-1}|\right).
\end{equation}
Averaging this bound over $q$ gives
\begin{equation}
\begin{aligned}
I(Y;Q)+D(P_Y\Vert P_{\tilde Y})&=h(P_{\tilde Y})-\int h(\PYQ{q})\,P_Q(dq) \\
&\le\frac12\int\ln\big|\Sigma_Y^{1/2}\mathcal J(\PYQ{q})\Sigma_Y^{1/2}\big|\,P_Q(dq)\\
&\le\frac12\ln\Big|\Sigma_Y^{1/2}\Big(\int\mathcal J(\PYQ{q})\,P_Q(dq)\Big)\Sigma_Y^{1/2}\Big|
=\frac12\ln\big|\bar M_Q\big|\le I_\cap^{\mathrm G}.
\end{aligned}
\label{eq:posterior-gaussian-redundancy-proof}
\end{equation}
In the last line, we used Jensen's inequality and concavity of $\ln \vert \cdot \vert$. 

\paragraph{Singular posterior covariances.}
Now allow $\Sigma_{Y|X_i}\succeq0$, and reduce to the positive-definite
case by continuity. Whiten the target to $Z=\Sigma_Y^{-1/2}(Y-\mathbb EY)$
and let $Z_t=\sqrt{1-t}\,Z+\sqrt t\,G$, with independent
$G\sim\mathcal N(0,\eye)$ and $0<t<1$. The smoothed target $Z_t$ again
satisfies Eq.~\eqref{eq:posterior-gaussian-assumption}, now with
$C_i(t)=(1-t)C_i+t\eye\succ0$, and any common lower channel $Q$ of the
original sources remains a common lower channel for $Z_t$. Since
$C_i(t)\succeq C_i$, Eq.~\eqref{eq:app-red-inverse-constraint} shows that
the smoothed feasible set is contained in the original one, so the
positive-definite case gives $I(Z_t;Q)\le I_\cap^{\mathrm G}$. As
$t\downarrow0$, $(Z_t,Q)\to(Z,Q)$ in distribution, and lower
semicontinuity of mutual information gives
$I(Y;Q)=I(Z;Q)\le\liminf_{t\downarrow0}I(Z_t;Q)\le I_\cap^{\mathrm G}$.
Taking the supremum over $Q$ proves $I_\cap\le I_\cap^{\mathrm G}$.
\end{proof}

\noindent\textbf{Theorem~\ref{thm:gaussian-optimality-redundancy} (restated).}
\begin{itshape}
\thmRedundancyStatement{\label{eq:thmredopt}}
\end{itshape}
\begin{proof}
First suppose $\sum_i C_i$ is singular, and choose a unit vector $v$
in the kernel. Every source determines
$Q=v^\top\Sigma_Y^{-1/2}(Y-\mu_Y)$ exactly, since its conditional
variance is $v^\top C_iv=0$. Thus $Q$ is a Gaussian common lower channel with
$I(Y;Q)=+\infty$, attaining $I_\cap$. 
This one-dimensional channel $\srcCh{Q}$ belongs to $\CondGauss(1)$.
Padding $Q$ with $d_Q-1$ independent standard Gaussian variables gives an
optimal channel in $\CondGauss(d_Q)$.

Now suppose 
$\sum_i C_i\succ0$.
Since $Y$ is Gaussian, $D(P_Y\Vert P_{\tilde Y})=0$, and
Proposition~\ref{prop:posterior-gaussian-redundancy} gives
$I_\cap\le I_\cap^{\mathrm G}$, the value of Eq.~\eqref{eq:bl2}.
The set $\mathcal R=\{R:C_i\preceq R\preceq\eye\ \forall i\}$ is compact
and nonempty, and every $R\in\mathcal R$ satisfies
$R\succeq n^{-1}\sum_i C_i\succ0$. Thus $R=M^{-1}$ gives an exact
correspondence with the feasible set of Eq.~\eqref{eq:bl2} by
Eq.~\eqref{eq:app-red-inverse-constraint}. The continuous objective
$-\tfrac12\ln|R|$ attains its maximum $I_\cap^{\mathrm G}$ at some
$\Rcap\in\mathcal R$, and by
Theorem~\ref{thm:posterior-gaussian-blackwell} the canonical
channel~\eqref{eq:canonical-posterior-channel} with $R=\Rcap$ is a common
lower channel attaining the matrix-program upper bound $I_\cap^{\mathrm G}$.
Thus, this channel is optimal and $I_\cap=I_\cap^{\mathrm G}$.

In this finite case, it remains to show that the channel can be realized
in $d_Q$ dimensions. For this, it suffices to show that
$\operatorname{rank}(\eye-\Rcap)\le d_Q$ (Sec.~\ref{app:gaussian-comparison}).
Since $C_i\preceq\Rcap\preceq\eye$, we have
$0\preceq\eye-\Rcap\preceq\eye-C_i$ for all $i$, so
$\operatorname{range}(\eye-\Rcap)\subseteq
\bigcap_i\operatorname{range}(\eye-C_i)$, whose dimension is $d_Q$.

To upper bound $d_Q$, we use the definition of $C_i$ and the Gaussian
conditioning formula to write
\begin{equation}
\eye-C_i=\Sigma_Y^{-1/2}(\Sigma_Y-\Sigma_{Y|X_i})\Sigma_Y^{-1/2}
=\Sigma_Y^{-1/2}\Sigma_{YX_i}\Sigma_{X_i}^{+}\Sigma_{X_iY}\Sigma_Y^{-1/2},
\label{eq:app-rank-bound}
\end{equation}
where $\Sigma_{X_i}^{+}$ is the pseudoinverse. Therefore,
$\operatorname{rank}(\eye-C_i)\le d_i$ and $d_Q\le\min(d_Y,\min_i d_i)$.
\end{proof}

\paragraph{Non-Gaussian sources.}
Except for the dimension bound, the proof above uses only that $Y$ is
Gaussian and that Eq.~\eqref{eq:posterior-gaussian-assumption} holds.
Hence $I_\cap=I_\cap^{\mathrm G}$ for Gaussian targets even when the
sources are not jointly Gaussian with the target, as stated in
Sec.~\ref{sec:beyond-gaussians}. 


\subsection{Gaussian optimality of union information}
\label{app:union-optimality}

We first establish the result for constant Gaussian posteriors, without
assuming that the target is Gaussian.
\begin{prop}
\label{prop:posterior-gaussian-union}
Suppose Eq.~\eqref{eq:posterior-gaussian-assumption} holds with every
$\Sigma_{Y|X_i}\succ0$ and $\Sigma_Y\succ0$. Set
$C_i=\Sigma_Y^{-1/2}\Sigma_{Y|X_i}\Sigma_Y^{-1/2}$ and let $I_\cup^{\mathrm G}$ and $\Wcup$ be the value and optimizer of
Eq.~\eqref{eq:union-convex}. Then $D(P_Y\Vert P_{\tilde Y})<\infty$ and
\begin{equation}
I_\cup=I_\cup^{\mathrm G}-D(P_Y\Vert P_{\tilde Y}),
\end{equation}
where $\tilde Y$ is a Gaussian with the same mean and covariance as $Y$.
The optimum is attained by a channel with constant Gaussian posterior covariance
$\Sigma_Y^{1/2}\Wcup \Sigma_Y^{1/2}$.
\end{prop}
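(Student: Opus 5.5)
We prove the two inequalities $I_\cup\ge I_\cup^{\mathrm G}-D(P_Y\Vert P_{\tilde Y})$ and $I_\cup\le I_\cup^{\mathrm G}-D(P_Y\Vert P_{\tilde Y})$ separately, using Theorem~\ref{thm:posterior-gaussian-blackwell} in both directions. First, $D(P_Y\Vert P_{\tilde Y})<\infty$: by Eq.~\eqref{eq:posterior-gaussian-assumption}, $P_Y=P_{m_1(X_1)}*\mathcal N(0,\Sigma_{Y|X_1})$ has a smooth bounded density. Since it also has finite covariance, $h(P_Y)$ is finite and $D=h(P_{\tilde Y})-h(P_Y)$.

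\textbf{Achievability (upper bound).} We take the optimizer $\Wcup$ of Eq.~\eqref{eq:union-convex}. It exists and is positive definite because the feasible set is compact, $W=t C_{\min}$ with small $t>0$ is feasible, and the objective blows up at singular $W$. Then $0\prec\Wcup\preceq C_i$ for all $i$. We need a channel $Q$ with constant Gaussian posterior $P_{Y|Q=q}=\mathcal N(m(q),\Sigma_Y^{1/2}\Wcup\Sigma_Y^{1/2})$, and we build it by deconvolution as in the proof of Theorem~\ref{thm:posterior-gaussian-blackwell}. Set $\Sigma_W:=\Sigma_Y^{1/2}\Wcup\Sigma_Y^{1/2}\preceq\Sigma_{Y|X_1}$ and $D_1=\Sigma_{Y|X_1}-\Sigma_W\succeq0$. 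Define $V=m_1(X_1)+U$ with independent $U\sim\mathcal N(0,D_1)$, and write $Y=V+Z$ with $Z\sim\mathcal N(0,\Sigma_W)$ independent of $V$. Then $Q:=V$ has posterior $\mathcal N(V,\Sigma_W)$ and $P_Y=P_V*\mathcal N(0,\Sigma_W)$. Theorem~\ref{thm:posterior-gaussian-blackwell} applies to each pair $(X_i,Q)$, since both have constant Gaussian posteriors and $\Sigma_W\preceq\Sigma_{Y|X_i}$, so $\srcCh{X_i}\preceqY\srcCh{Q}$ for all $i$. Its information is
\[
I(Y;Q)=h(P_Y)-\tfrac12\ln\big((2\pi e)^{d_Y}|\Sigma_W|\big)=h(P_{\tilde Y})-D-\tfrac12\ln\big((2\pi e)^{d_Y}|\Sigma_Y||\Wcup|\big)=I_\cup^{\mathrm G}-D.
\]

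\textbf{Converse (lower bound).} Let $Q$ be any common upper channel. For each $i$, Lemma~\ref{lem:garbling-covariance} only applies when the less informative channel has constant posterior covariance, which $X_i$ does. Applied with $A=X_i$ and $B=Q$, it gives $\Sigma_{Y|Q}\preceq\Sigma_{Y|X_i}$, where $\Sigma_{Y|Q}$ is the averaged posterior covariance. This is finite because $\Sigma_Y$ is. So $W_Q:=\Sigma_Y^{-1/2}\Sigma_{Y|Q}\Sigma_Y^{-1/2}\preceq C_i$ for all $i$, and $W_Q$ is feasible for Eq.~\eqref{eq:union-convex}. Feasibility gives $-\tfrac12\ln|W_Q|\ge I_\cup^{\mathrm G}$, so it suffices to show
\[
I(Y;Q)\ge h(P_Y)-\tfrac12\ln\big((2\pi e)^{d_Y}|\Sigma_{Y|Q}|\big)=-\tfrac12\ln|W_Q|-D.
\]
This follows from $I(Y;Q)=h(P_Y)-\int h(P_{Y|Q=q})\,P_Q(dq)$ together with two inequalities. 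First, the maximum-entropy bound $h(P_{Y|Q=q})\le\tfrac12\ln\big((2\pi e)^{d_Y}|\Sigma_{Y|Q=q}|\big)$. Second, concavity of $\ln|\cdot|$ (Jensen), which gives $\int\ln|\Sigma_{Y|Q=q}|\,dP_Q\le\ln|\Sigma_{Y|Q}|$. Taking the infimum over $Q$ gives $I_\cup\ge I_\cup^{\mathrm G}-D$. Together with achievability, this proves equality, and the optimum is attained by the channel constructed above.

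\textbf{Main obstacle.} The delicate step is making the entropy manipulations rigorous when $Q$ is arbitrary. Posteriors may be singular with $h=-\infty$, in which case $I(Y;Q)=+\infty$ and the bound is trivial. Otherwise, writing $I(Y;Q)=h(Y)-h(Y|Q)$ requires $h(Y)$ to be finite, which we established above. In the singular case we would handle things with the convention $-\ln|W|=+\infty$, or equivalently by using $I(Y;Q)=D(P_{YQ}\Vert P_YP_Q)$ directly.
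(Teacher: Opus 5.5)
Your proposal is correct and follows essentially the same route as the paper. The achievability channel $Q=m_1(X_1)+U$ with $Y=Q+Z$, $Z\sim\mathcal N(0,\Sigma_Y^{1/2}\Wcup\Sigma_Y^{1/2})$, is exactly the paper's construction, and the converse via Lemma~\ref{lem:garbling-covariance}, the Gaussian maximum-entropy bound, and Jensen's inequality for $\ln|\cdot|$ is the paper's argument. For the obstacle you flag, the paper restricts to $I(Y;Q)<\infty$: this forces almost every posterior to be absolutely continuous with respect to $P_Y$, hence to have a density and positive-definite covariance, so $W_Q\succ0$ and the entropy manipulations are legitimate.
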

\begin{proof}
Gaussian convolution and finite covariance ensure that $h(P_Y)$ and
$D(P_Y\Vert P_{\tilde Y})$ are finite, with $D(P_Y\Vert P_{\tilde Y})=h(P_{\tilde Y})-h(P_Y)$.
For any common upper channel $Q$ with $I(Y;Q)<\infty$,
Lemma~\ref{lem:garbling-covariance} gives
$\Sigma_{Y|Q}\preceq \Sigma_{Y|X_i}$
for all $i$, so $W_Q:=\Sigma_Y^{-1/2}\Sigma_{Y|Q}\Sigma_Y^{-1/2}\preceq C_i$.
Since $I(Y;Q)<\infty$, almost every posterior $\PYQ{q}$ is absolutely
continuous with respect to $P_Y$, which has a density. Each posterior
therefore has a density, so its entropy is defined and its covariance
$\Sigma_{Y|Q=q}$ is positive definite. In particular, $W_Q\succ0$ is
feasible for the union program. The Gaussian
maximum-entropy bound,
$h(\PYQ{q})\le\frac12\ln\!\left((2\pi e)^{d_Y}|\Sigma_{Y|Q=q}|\right)$,
then gives
\begin{equation}
\begin{aligned}
I(Y;Q)+D(P_Y\Vert P_{\tilde Y})&=h(P_{\tilde Y})-\int h(\PYQ{q})\,P_Q(dq) \\
&\ge-\frac12\int\ln\big|\Sigma_Y^{-1/2}\Sigma_{Y|Q=q}\Sigma_Y^{-1/2}\big|\,P_Q(dq)\\
&\ge-\frac12\ln\Big|\Sigma_Y^{-1/2}\Big(\int\Sigma_{Y|Q=q}\,P_Q(dq)\Big)\Sigma_Y^{-1/2}\Big|
=-\frac12\ln|W_Q|\ge I_\cup^{\mathrm G}.
\end{aligned}
\label{eq:posterior-gaussian-union-bound}
\end{equation}
In the last line, we used Jensen's inequality and concavity of $\ln|\cdot|$.
Channels with infinite information already satisfy this lower bound.

The matrix program has a positive-definite optimizer $\Wcup$: its
feasible set is compact and contains $\epsilon\eye$ for sufficiently
small $\epsilon>0$. Set $\Sigma^*_\cup=\Sigma_Y^{1/2}\Wcup \Sigma_Y^{1/2}$, draw
$X_1\sim P_{X_1}$, and construct
\begin{equation}
Q\sim\mathcal N\big(m_1(X_1),\Sigma_{Y|X_1}-\Sigma^*_\cup\big),\qquad
Y'=Q+Z,\quad Z\sim\mathcal N(0,\Sigma^*_\cup),
\label{eq:posterior-gaussian-union-construction}
\end{equation}
with $Z$ independent of $(X_1,Q)$. Given $X_1$, $Y'\sim\mathcal N(m_1(X_1),\Sigma_{Y|X_1})$, so by
Eq.~\eqref{eq:posterior-gaussian-assumption} the pair $(X_1,Y')$ has the
same distribution as $(X_1,Y)$; in particular, $Y'\sim P_Y$. Moreover,
$P_{Y'|Q=q}=\mathcal N(q,\Sigma^*_\cup)$.
Since $\Sigma^*_\cup\preceq \Sigma_{Y|X_i}$ for every $i$,
Theorem~\ref{thm:posterior-gaussian-blackwell} makes $Q$ a common upper
channel. Its information is
$h(P_Y)-\frac12\ln((2\pi e)^{d_Y}|\Sigma^*_\cup|)=I_\cup^{\mathrm G}-D(P_Y\Vert P_{\tilde Y})$,
attaining the bound.
\end{proof}

\noindent\textbf{Theorem~\ref{thm:gaussian-optimality-union} (restated).}
\begin{itshape}
\thmUnionStatement{\label{eq:thmunopt}}
\end{itshape}

\begin{proof}
Proposition~\ref{prop:posterior-gaussian-union} with $D(P_Y\Vert P_{\tilde Y})=0$ gives
$I_\cup=I_\cup^{\mathrm G}$, the value of Eq.~\eqref{eq:union-convex}. Its
optimizer $\Wcup$ is unique by strict convexity of $-\ln\det$, and by
Theorem~\ref{thm:posterior-gaussian-blackwell} the canonical channel with
$R=\Wcup$ is a common upper channel with information $I_\cup^{\mathrm G}$. Thus, this channel attains $I_\cup$, the infimum over all channels.

It
remains to show that it can be realized in $d_Q$ dimensions.
For this, it suffices to show that
$\operatorname{rank}(\eye-\Wcup)\le d_Q$ (Sec.~\ref{app:gaussian-comparison}). We
show that, in suitable orthonormal coordinates, $\Wcup=\operatorname{diag}(A^*,\eye)$
with $A^*\in\psd{d_Q}$, which implies this.

Let $\mathcal U:=\operatorname{range}\big(\sum_i(\eye-C_i)\big)$, so that $\dim\mathcal U=d_Q$. Also, since each $\eye-C_i\succeq0$,
$\mathcal U^\perp=\bigcap_i\ker(\eye-C_i)$. Intuitively, $\mathcal U$
contains the normalized target directions about which at least one
source is informative.  Let $O=[O_1\ O_2]$ be an orthogonal matrix in $\mathbb R^{d_Y\times d_Y}$ whose
$d_Q$ columns $O_1$ span $\mathcal U$ and whose remaining columns $O_2$ span
$\mathcal U^\perp$. Since $(\eye-C_i)O_2=0$, we have $C_iO_2=O_2$, and
therefore 
\begin{equation}
O^\top C_iO=
\begin{pmatrix}O_1^\top C_iO_1 & O_1^\top C_iO_2\\ O_2^\top C_iO_1 & O_2^\top C_iO_2\end{pmatrix}
=\begin{pmatrix}C_i' & 0\\ 0 & \eye\end{pmatrix},
\qquad C_i':=O_1^\top C_iO_1 .
\end{equation}
The map $W\mapsto O^\top WO$ preserves determinants and the Loewner order,
so $W$ is feasible for Eq.~\eqref{eq:union-convex} iff $O^\top WO\preceq
O^\top C_iO$ for all $i$, with the same objective value. We may therefore
work in these coordinates and, with slight abuse of notation, write $W$ and
$C_i=\operatorname{diag}(C_i',\eye)$ for the rotated matrices.

Write a feasible $W$ as
$W=\left(\begin{smallmatrix}A&B\\B^\top&D\end{smallmatrix}\right)$. The
diagonal blocks of $W\preceq C_i$ give $A\preceq C_i'$ and $D\preceq\eye$,
so $W'=\operatorname{diag}(A,\eye)$, which
discards all information about $\mathcal U^\perp$, is also feasible.
Moreover,
\begin{equation}
|W|\le|A|\,|D|\le|A|=|W'|,
\end{equation}
where the first inequality is Fischer's inequality for positive-definite
block matrices and the second follows from $D\preceq\eye$. Thus, by uniqueness,
$\Wcup=\operatorname{diag}(A^*,\eye)$. 

Finally, by Eq.~\eqref{eq:app-rank-bound},
$\operatorname{rank}(\eye-C_i)\le d_i$, so $d_Q\le\min(d_Y,\sum_i d_i)$.
\end{proof}

\subsection{Two-source closed form}
\label{app:two-source}

\noindent\textbf{Theorem~\ref{thm:two-source-union-closed-form} (restated).}
\begin{itshape}
\thmTwoSourceStatement{}
\end{itshape}

\begin{proof}
Set $\widehat W=V^\top WV$. The normalization $V^\top C_1V=\eye$ gives
$|V|^2|C_1|=1$, so $|W|=|\widehat W|/|V|^2=|C_1|\,|\widehat W|$. 
Since $V$
is invertible, congruence by $V$ preserves the Loewner order:
$W\preceq C_k$ iff $\widehat W\preceq V^\top C_kV$. The constraints
$0\preceq W\preceq C_1$ and $W\preceq C_2$ of
Eq.~\eqref{eq:union-convex} therefore become
\begin{equation}
0\leL\widehat W\leL V^\top C_1 V =\eye,
\qquad
\widehat W\leL V^\top C_2 V =\Lambda=\operatorname{diag}(\lambda_1,\dots,\lambda_{d_Y}).
\end{equation}
Thus every feasible \(\widehat W\) has diagonal entries
\(\widehat W_{jj}\le \min(1,\lambda_j)\). By Hadamard's inequality,
\begin{equation}
|\widehat W|\le\prod_{j=1}^{d_Y}\widehat W_{jj}
\le\prod_{j=1}^{d_Y}\min(1,\lambda_j).
\end{equation}
Since $|W|=|C_1|\,|\widehat W|$, this gives a lower bound on the union
objective in Eq.~\eqref{eq:union-convex}: every feasible $W$ satisfies
$-\frac12\ln|W|\ge-\frac12\ln|C_1|-\frac12\sum_j\ln\min(1,\lambda_j)$.
Equality is achieved by
\(\widehat W=\operatorname{diag}(\min(1,\lambda_j))\), and
\(W=V^{-\top}\widehat WV^{-1}\) gives the stated optimizer \(\Wcup\).
This gives
\begin{equation}
I_\cup =-\frac12\ln|\Wcup|
=-\frac12\ln|C_1|
-\frac12\sum_{j=1}^{d_Y}\ln\min(1,\lambda_j).
\end{equation}
Using $\ln\min(1,\lambda)=\frac12(\ln\lambda-|\ln\lambda|)$,
$\sum_j\ln\lambda_j=\ln|C_2|-\ln|C_1|$, and
$I(Y;X_i)=-\frac12\ln|C_i|$ gives the stated expression.
\end{proof}

\subsection{On the inclusion--exclusion principle for two sources}
\label{app:two-source-iep}
We now discuss the inclusion--exclusion principle for Blackwell redundancy
and union information with two sources. For sets,
$|A\cap B|=|A|+|B|-|A\cup B|$, suggesting the PID analogue 
\begin{equation}
I_\cap\stackrel{?}{=}I(Y;X_1)+I(Y;X_2)-I_\cup.
\end{equation}
Blackwell redundancy and union information, however, are defined
separately in Eq.~\eqref{eq:red}, and they need not satisfy
inclusion--exclusion~\citep{kolchinskyNovelApproachPartial2022}. We now
characterize exactly when they do for Gaussian sources, in three steps. First,
we drop the no-information constraint $\eye\preceq M$ from
Eq.~\eqref{eq:bl2} and solve the resulting relaxed problem in closed form,
using the same generalized eigenvectors as in
Theorem~\ref{thm:two-source-union-closed-form}. Second, we compare its
solution with the union information solution $\Wcup$, which shows that the relaxed problem satisfies
inclusion--exclusion exactly. Third, we restore the no-information
constraint.

Writing $R=M^{-1}$ and using Eq.~\eqref{eq:app-red-inverse-constraint},
the relaxed Blackwell redundancy problem is
\begin{equation}
I_\cap^{\mathrm{rel}}=\max_{R\succ0}\,-\frac12\ln|R|
\quad\where\quad R\succeq C_1,\ R\succeq C_2.
\label{eq:app-relaxed-and-union}
\end{equation}
With $V$ and $\Lambda$ as in Theorem~\ref{thm:two-source-union-closed-form},
set $\widehat R=V^\top RV$. Since $V^\top C_1V=\eye$ and
$V^\top C_2V=\Lambda$, we have $|R|=|C_1|\,|\widehat R|$, and the
constraints become $\widehat R\succeq\eye$ and $\widehat R\succeq\Lambda$.
These imply $\widehat R^{-1}\preceq\eye$ and
$\widehat R^{-1}\preceq\Lambda^{-1}$, so
$(\widehat R^{-1})_{jj}\le1/\max(1,\lambda_j)$. Hadamard's inequality
applied to $\widehat R^{-1}$ then gives
$|\widehat R|\ge\prod_j\max(1,\lambda_j)$, with equality only for
$\widehat R=\operatorname{diag}(\max(1,\lambda_j))$. The unique relaxed
optimizer is therefore
\begin{equation}
R_+=V^{-\top}\operatorname{diag}\big(\max(1,\lambda_j)\big)V^{-1}.
\label{eq:two-source-relaxed-meet}
\end{equation}
Compare this with the union optimizer
$\Wcup=V^{-\top}\operatorname{diag}(\min(1,\lambda_j))V^{-1}$ of
Theorem~\ref{thm:two-source-union-closed-form}: in each direction $j$, the
union keeps the more informative of the two sources, and the relaxed
redundancy keeps the less informative one. Since
$\min(1,\lambda_j)+\max(1,\lambda_j)=1+\lambda_j$ and
$\min(1,\lambda_j)\max(1,\lambda_j)=\lambda_j$,
\begin{equation}
R_++\Wcup=C_1+C_2,
\qquad
|R_+|\,|\Wcup|=|C_1|\,|C_2|.
\label{eq:app-matrix-iep}
\end{equation}

If $R_+\preceq\eye$, then $R_+$ also satisfies the no-information
constraint, so $\Mcap=R_+^{-1}$ solves Eq.~\eqref{eq:bl2}. The first
identity in Eq.~\eqref{eq:app-matrix-iep} then gives
Eq.~\eqref{eq:matrix-iep}, which justifies the redundancy fast path.

The second identity implies that 
$I_\cap^{\mathrm{rel}}=-\frac12\ln|R_+|$ satisfies inclusion--exclusion:
\[
I_\cap^{\mathrm{rel}}=I(Y;X_1)+I(Y;X_2)-I_\cup.
\]

Note that for two sources, the BROJA redundancy~\citep{bertschinger2014quantifying} is defined by
inclusion--exclusion, $I_\cap^{\mathrm{BR}}:=I(Y;X_1)+I(Y;X_2)-I_\cup^{\mathrm{BR}}$.
Since $I_\cup=I_\cup^{\mathrm{BR}}$, this gives
$I(Y;X_1)+I(Y;X_2)-I_\cup=I_\cap^{\mathrm{BR}}$, so $I_\cap^{\mathrm{rel}}=I_\cap^{\mathrm{BR}}$. Because the feasible set of
Eq.~\eqref{eq:bl2} is contained in that of the relaxed problem, we always
have
\begin{equation}
I_\cap\le 
I_\cap^{\mathrm{BR}}.
\end{equation}
So, Blackwell redundancy never exceeds BROJA redundancy. Since $R_+$ is
the unique relaxed optimizer, equality holds exactly when
$R_+\preceq\eye$.

\subsection{Calculations for the redundancy comparison}
\label{app:why-blackwell}

We now define the redundancy measures compared in
Sec.~\ref{sec:why-blackwell-redundancy}. Besides Blackwell redundancy, we
consider MMI, $I_\cap^{\min}$, $I_\cap^{\mathrm{PM}}$, $I_\cap^{\mathrm{ccs}}$, and
$I_\cap^{\mathrm{sx}}$, all of which are defined for any number of sources and
can be evaluated for continuous variables. We also
discuss the G\'acs--K\"orner redundancy, which we excluded from the
comparison. 
Note that Blackwell redundancy, MMI, $I_\cap^{\min}$, and $I_\cap^{\mathrm{ccs}}$ are
invariant under invertible rescaling of individual sources, but
$I_\cap^{\mathrm{PM}}$ and $I_\cap^{\mathrm{sx}}$ are not.

For each measure, we give its definition and evaluate it on the
COPY and COPY+ examples of Table~\ref{tab:why-blackwell}. Recall that $Y=(Y_1,Y_2)$, $X_1=Y_1+\xi_1$,
$X_2=Y_2+\xi_2$, and, for COPY+, $X_3=(Y_1+Y_2)/\sqrt2+\xi_3$, where
$Y_1,Y_2,\xi_1,\xi_2,\xi_3$ are independent standard Gaussians. We write
$X_i=u_i^\top Y+\xi_i$, with unit vectors $u_1=(1,0)$, $u_2=(0,1)$, and
$u_3=(1,1)/\sqrt2$. Thus every $X_i\sim\mathcal N(0,2)$, every
$P_{X_i|Y}$ is Gaussian with unit variance, and all sources are equally
informative, $I(Y;X_i)=\tfrac12\ln2$. Below, $p$ denotes the corresponding
Gaussian densities. For COPY,
every measure can be evaluated in closed form. For COPY+, MMI and
$I_\cap^{\min}$ are also closed form, and we evaluate the remaining
measures by Monte Carlo integration, averaging over $10^8$ samples of
$(Y,\xi_1,\xi_2,\xi_3)$ with the exact densities (standard errors below
$10^{-4}$ nats).

\paragraph{Blackwell redundancy.}
We use the matrix program of Eq.~\eqref{eq:bl2}. Here $\Sigma_Y=\eye$, so
$C_i=\Sigma_{Y|X_i}$, and observing $X_1$ halves the posterior variance of
$Y_1$ while leaving $Y_2$ untouched: $C_1=\operatorname{diag}(\frac12,1)$, and
likewise $C_2=\operatorname{diag}(1,\frac12)$. For COPY, the constraints of
Eq.~\eqref{eq:bl2} are therefore
$\eye\preceq M\preceq\operatorname{diag}(2,1)$ and
$\eye\preceq M\preceq\operatorname{diag}(1,2)$ (using
Eq.~\eqref{eq:app-red-inverse-constraint}). The first pair forces
$M_{22}=1$ and the second forces $M_{11}=1$. Then $M-\eye\succeq0$ has a
zero diagonal, so $M=\eye$, and $I_\cap=\frac12\ln|\eye|=0$.
Geometrically, the only ellipse that contains the unit disk and lies inside
both source ellipses is the unit disk itself. For COPY+, the third source
only adds a constraint, and $M=\eye$ remains feasible since $C_3\preceq\eye$,
so again $I_\cap=0$.

\paragraph{Minimum mutual information (MMI).}
MMI defines redundancy as the smallest source--target mutual
information~\citep{barrett_exploration_2015},
$I_\cap^{\mathrm{MMI}}=\min_i I(Y;X_i)$. Since every source has $I(Y;X_i)=\tfrac12\ln2$, MMI gives
$\tfrac12\ln2\approx0.35$ for both COPY and COPY+.

\paragraph{Minimum specific information, $I_\cap^{\min}$.}
\citet{williams2010nonnegative} define redundancy as the expected minimum
specific information. In continuous form~\citep{milzman2022signed},
\begin{equation}
 I_\cap^{\min}=\mathbb E_Y\min_i D\bigl(P_{X_i|Y}\Vert P_{X_i}\bigr).
\end{equation}
Since $P_{X_i}=\mathcal N(0,2)$ and $P_{X_i|Y=y}=\mathcal N(u_i^\top y,1)$,
every specific information has the same form,
$D(P_{X_i|Y=y}\Vert P_{X_i})=\frac12\ln2-\frac14+\frac14(u_i^\top y)^2$, so
\begin{equation}
I_\cap^{\min}=\tfrac12\ln2-\tfrac14+\tfrac14\,\mathbb E\min_i(u_i^\top Y)^2.
\end{equation}
The remaining expectation can be computed in closed form. Write $Y$ in
polar coordinates, $Y=r(\cos\theta,\sin\theta)$, where $r$ and $\theta$ are
independent, $\mathbb Er^2=2$, and $\theta$ is uniform on $[0,2\pi)$. With
$u_i=(\cos\phi_i,\sin\phi_i)$, we have $(u_i^\top Y)^2=r^2\cos^2(\theta-\phi_i)$. 
Since the integrand has period $\pi$, we have
\begin{equation}
\mathbb E\min_i(u_i^\top Y)^2=\frac2\pi\int_0^\pi\min_i\cos^2(\theta-\phi_i)\,d\theta .
\end{equation}
Evaluating this elementary integral gives $1-\frac2\pi$ for COPY
($\phi_i=0,\frac\pi2$), hence  $I_\cap^{\min}=\tfrac12\ln2-\tfrac{1}{2\pi}\approx0.19$. It gives $1-\frac{1+\sqrt2}{\pi}$ for COPY+
($\phi_i=0,\frac\pi4,\frac\pi2$), hence $I_\cap^{\min}=\tfrac12\ln2-\tfrac{1+\sqrt2}{4\pi}\approx0.15$.

\paragraph{Pointwise minimum, $I_\cap^{\mathrm{PM}}$.}
The pointwise redundancy of \citet{finn_pointwise_2018}, in the continuous
form of \citet{milzman2022signed}, is
\begin{equation}
 I_\cap^{\mathrm{PM}}=\mathbb E\min_i[-\ln p(X_i)]
                     -\mathbb E\min_i[-\ln p(X_i|Y)].
\end{equation}
For COPY, $-\ln p(X_i)=\frac12\ln(4\pi)+\frac12(X_i/\sqrt2)^2$
and $-\ln p(X_i|Y)=\frac12\ln(2\pi)+\frac12\xi_i^2$. Since $(X_1,X_2)/\sqrt2$ and
$(\xi_1,\xi_2)$ are both pairs of independent standard Gaussians, the two
expected minima differ only by the constants, and
$I_\cap^{\mathrm{PM}}=\tfrac12\ln2$. For COPY+, the standardized sources are
correlated, so this cancellation fails, and we evaluate the expectations
numerically.

\paragraph{Common change in surprisal, $I_\cap^{\mathrm{ccs}}$.}
\citet{ince2017measuring} evaluates local information quantities under
$P^{\mathrm{ME}}$, the maximum-entropy distribution with the same
source--target marginals $P_{Y,X_i}$ and the same source marginal
$P_{X_1,\ldots,X_n}$. For
Gaussians, these constraints fix the full covariance, so
$P^{\mathrm{ME}}=P$.
For $\varnothing\ne B\subseteq[n]:=\{1,\ldots,n\}$, let
$i_B=\ln[p^{\mathrm{ME}}(X_B|Y)/p^{\mathrm{ME}}(X_B)]$ be the local mutual information
between $Y$ and $X_B=(X_i)_{i\in B}$, where $p^{\mathrm{ME}}$ is the density. The local co-information is
$c=\sum_{B}(-1)^{|B|+1}i_B$, and
\begin{equation}
I_\cap^{\mathrm{ccs}}=\mathbb E_{P^{\mathrm{ME}}}\big[c\,\mathbf 1\{\operatorname{sign}c
=\operatorname{sign}i_{\{i\}}\ \forall i,\ \operatorname{sign}c=\operatorname{sign}i_{[n]}\}\big],
\end{equation}
that is, local co-information is counted only when its sign agrees with
that of every single-source and the full-joint local mutual information.
Following Ince's implementation, intermediate subsets are not
sign-tested. For COPY, the sources are independent and
conditionally independent given $Y$, so $i_{\{1,2\}}=i_{\{1\}}+i_{\{2\}}$
and $c=0$ pointwise, giving $I_\cap^{\mathrm{ccs}}=0$. For COPY+, we
evaluate the expectation numerically.

\paragraph{Shared exclusions, $I_\cap^{\mathrm{sx}}$.}
The shared-exclusion redundancy, introduced for discrete variables
by \citet{makkeh2021introducing} and extended to continuous variables by
\citet{ehrlich2024shared}, measures the information in the event that at least one source takes its observed
value. For continuous sources, the probability of this event is replaced by
its small-neighborhood limit, a sum of densities, so that
\begin{equation}
 I_\cap^{\mathrm{sx}}=\mathbb E\ln
 \frac{\sum_i p(X_i|Y)}{\sum_i p(X_i)}.
\end{equation}
For COPY,
$p(X_i)=\phi(X_i/\sqrt2)/\sqrt2$ and $p(X_i|Y)=\phi(\xi_i)$, where $\phi$ is the
standard normal density. Since $(X_1,X_2)/\sqrt2$ and $(\xi_1,\xi_2)$ have the
same distribution, $I_\cap^{\mathrm{sx}}=\tfrac12\ln2$. For COPY+, we
evaluate the expectation numerically.

\paragraph{G\'acs--K\"orner redundancy.}
\citet{griffith2014intersection} define redundancy as the information
about $Y$ in the common random variable of the
sources~\citep{gacs_common_1973}, that is, the most informative variable
that is a deterministic function of every source:
\begin{equation}
I_\cap^{\mathrm{GK}}=\sup\big\{I(Y;Q):\ Q=f_i(X_i)\text{ almost surely for all }i\big\}.
\end{equation}
This measure is trivial whenever the sources have a nondegenerate joint
Gaussian distribution, or more generally a joint density that is positive
everywhere. Indeed, suppose $f_1(X_1)=f_2(X_2)$ almost surely. Given
$X_1=x_1$, the source $X_2$ still has a positive density on all of
$\mathbb R^{d_2}$, and $f_2(X_2)=f_1(x_1)$ must hold almost surely. Hence
$f_2$ is almost everywhere equal to a constant, so the common variable is
trivial and $I_\cap^{\mathrm{GK}}=0$.
In particular, $I_\cap^{\mathrm{GK}}$ vanishes for COPY and COPY+, but also
for two noisy copies $X_i=Y+\xi_i$ of the same scalar target, where each
source's individual information is entirely redundant. This is why we excluded it
from the comparison in Sec.~\ref{sec:why-blackwell-redundancy}.



\section{Optimal control model}
\label{app:optimal-control}

This section describes the linear--quadratic--Gaussian (LQG) model of
Sec.~\ref{sec:optimal-control}.

We consider a particle with position $\mathbf r\in\mathbb R^2$, driven by a
linear restoring force with rotational drift $J \mathbf r$, applied control $\mathbf u$, and
standard Brownian noise:
\begin{equation}
 d\mathbf r=(J\mathbf r+\mathbf u)\,dt+d\mathbf W,
 \qquad J=\begin{pmatrix}-1&-0.5\\0.5&-0.3\end{pmatrix}.
 \label{ctrl:continuous}
\end{equation}
Without
control ($\mathbf u=0$), the particle fluctuates with stationary covariance
$\Sigma_0=\int_0^{\infty}e^{Js}e^{J^\top s}\,ds$, which solves the
Lyapunov equation $J\Sigma_0+\Sigma_0 J^{\top}+I=0$.
The controller samples the system at
intervals $\Delta t$ and holds each control fixed between samples.
Integrating Eq.~\eqref{ctrl:continuous} over one interval gives the exact
discrete-time dynamics
\begin{equation}
 \mathbf r_{t+1}=A\mathbf r_t+B\mathbf u_t+\mathbf w_t,
 \qquad A=e^{J\Delta t},
 \qquad B=\int_0^{\Delta t}e^{Js}\,ds=J^{-1}(A-I),
 \label{ctrl:discrete}
\end{equation}
where, from here on, $t=0,1,2,\ldots$ counts sampling steps (physical time
$t\Delta t$), and $\mathbf w_t$ is independent Gaussian process noise with
covariance
$\Sigma_{\mathbf w}=\int_0^{\Delta t}e^{Js}e^{J^\top s}\,ds
=\Sigma_0-A\Sigma_0 A^{\top}$.
The sensor reports $\mathbf s_t=\mathbf r_t+\mathbf n_t$, where $\mathbf n_t$
is independent Gaussian measurement noise with zero mean and covariance
\begin{equation}
 \Sigma_{\mathbf n}=\nu S,\qquad
 S=\begin{pmatrix}11&-4\\-4&2\end{pmatrix}.
 \label{ctrl:sensor}
\end{equation}
The fixed matrix $S$ sets the noise anisotropy and orientation; only its
strength $\nu$ varies.

The controller observes the measurements and applies
$\mathbf u_t=\mathbf u_t^{\mathrm{cmd}}+\boldsymbol\eta_t$. The
commanded control $\mathbf u_t^{\mathrm{cmd}}$ is chosen to minimize the 
long-run average of $\|\mathbf r_t\|^2+\|\mathbf u_t^{\mathrm{cmd}}\|^2$, while
$\boldsymbol\eta_t\sim\mathcal N(0,\sigma_\eta^2I)$ is independent actuator noise; we set $\sigma_\eta^2=0.05$. 
By the separation principle, the optimal controller
combines a Kalman filter with linear--quadratic regulator (LQR)
feedback~\citep{bertsekas2017dynamic}.

At each time $t$, the controller has two inputs: the current measurement
$\mathbf s_t$ and its memory $\mathbf m_t=\widehat{\mathbf r}_{t\mid t-1}$,
the Kalman prediction of $\mathbf r_t$ from the earlier measurements
$\mathbf s_0,\ldots,\mathbf s_{t-1}$~\citep{kalman1960new}.
In steady state, the controller performs three steps:
\begin{equation}
 \begin{aligned}
 \widehat{\mathbf r}_{t\mid t}
 &=L\mathbf s_t+(I-L)\mathbf m_t,\\
 \mathbf u_t
 &=-F\widehat{\mathbf r}_{t\mid t}+\boldsymbol\eta_t,\\
 \mathbf m_{t+1}&=(A-BF)\widehat{\mathbf r}_{t\mid t}.
 \end{aligned}
 \label{ctrl:controller}
\end{equation}
The first line is the Kalman update, which combines the prediction with the
new measurement using the steady-state Kalman gain $L$. The second applies
the LQR feedback $\mathbf u_t^{\mathrm{cmd}}=-F\widehat{\mathbf r}_{t\mid t}$
plus actuator noise. The third line propagates the estimate through the
known dynamics.

Both gains follow from discrete algebraic Riccati
equations~\citep{anderson1979optimal,bertsekas2017dynamic}. The Kalman gain
is $L=P(P+\Sigma_{\mathbf n})^{-1}$, where the prediction-error covariance
$P=\operatorname{Cov}(\mathbf r_t-\mathbf m_t)$ solves
\begin{equation}
 P=APA^{\top}-AP(P+\Sigma_{\mathbf n})^{-1}PA^{\top}+\Sigma_{\mathbf w}+\sigma_\eta^2BB^{\top},
\end{equation}
and the actuator noise contributes $\sigma_\eta^2BB^{\top}$ to the effective
process noise. The LQR gain is
$F=(I+B^{\top}\Pi B)^{-1}B^{\top}\Pi A$, where $\Pi$ solves
\begin{equation}
 \Pi=A^{\top}\Pi A-A^{\top}\Pi B(I+B^{\top}\Pi B)^{-1}B^{\top}\Pi A+I.
\end{equation}

Substituting Eq.~\eqref{ctrl:controller} into Eq.~\eqref{ctrl:discrete}, the
closed-loop state $\mathbf z_t=(\mathbf r_t,\mathbf m_t)$ evolves as
\begin{equation}
 \mathbf z_{t+1}=\mathcal A\mathbf z_t
 +\mathcal G_{\mathbf n}\mathbf n_t+\mathcal G_{\eta}\boldsymbol\eta_t+\mathcal G_{\mathbf w}\mathbf w_t,
 \qquad
 \mathcal A=\begin{pmatrix}A-BFL&-BF(I-L)\\ \bar AL&\bar A(I-L)\end{pmatrix},
\end{equation}
where $\bar A=A-BF$, $\mathcal G_{\mathbf n}=\big(\begin{smallmatrix}-BFL\\ \bar AL\end{smallmatrix}\big)$,
$\mathcal G_{\eta}=\big(\begin{smallmatrix}B\\0\end{smallmatrix}\big)$, and
$\mathcal G_{\mathbf w}=\big(\begin{smallmatrix}I\\0\end{smallmatrix}\big)$.
Since $\mathcal A$ is stable and the noises are independent, the stationary
covariance $\Sigma_{\mathbf z}$ of $\mathbf z_t$ solves the discrete Lyapunov
equation
\begin{equation}
 \Sigma_{\mathbf z}=\mathcal A\Sigma_{\mathbf z}\mathcal A^{\top}
 +\mathcal G_{\mathbf n}\Sigma_{\mathbf n}\mathcal G_{\mathbf n}^{\top}
 +\sigma_\eta^2\mathcal G_{\eta}\mathcal G_{\eta}^{\top}
 +\mathcal G_{\mathbf w}\Sigma_{\mathbf w}\mathcal G_{\mathbf w}^{\top}.
\end{equation}
Because $\mathbf s_t=\mathbf r_t+\mathbf n_t$ and
$\mathbf u_t=-FL\mathbf s_t-F(I-L)\mathbf m_t+\boldsymbol\eta_t$, with
$\mathbf n_t$ and $\boldsymbol\eta_t$ independent of $\mathbf z_t$, the joint
covariance of $(Y,X_1,X_2)=(\mathbf u_t,\mathbf s_t,\mathbf m_t)$ follows
directly from $\Sigma_{\mathbf z}$, $\Sigma_{\mathbf n}$, and $\sigma_\eta^2$.

Figure~\ref{ctrl:figure} fixes $\Delta t=0.15$ and sweeps the sensor
noise strength over $10^{-3}\leq\nu\leq10^{3}$ at $73$ logarithmically
spaced values.
The Kalman prediction, LQR gains, and redundancy/union/synergy values are recomputed at each point. 

\section{Newton--PCG optimization algorithm}
\label{sm:sec:optimization}

Here we describe our optimization algorithm for computing redundancy and union information. 
Both use the same method: Newton's method with matrix-free steps computed by preconditioned conjugate gradients
(PCG), constraints imposed by log-barriers, and primal--dual stopping bounds used to certify accuracy. After a common
preprocessing step, we describe the reduced problem, its bounds, and its
barrier formulas separately for redundancy (Sec.~\ref{sm:sec:optimization-redundancy})
and union information (Sec.~\ref{sm:sec:optimization-union}), followed by the
shared iteration scheme (Sec.~\ref{sm:sec:newton-pcg}). 

\subsection{Covariance preprocessing}
\label{sm:sec:optimization-preprocessing}
Redundancy and union information depend only on the source--target
marginals, and they are invariant under invertible affine transformations
of the target and of each source separately. Both solvers therefore take
as input only the covariance matrix of each pair $(X_i,Y)$. 
We retain the assumption $\Sigma_Y\succ0$, but allow singular source--target
marginals. If a source covariance is singular, we first discard its
zero-variance directions, which contain no information. With
$\Sigma_Y=L_YL_Y^\top$, the posterior covariances in whitened target
coordinates can be computed as 
\begin{equation}
 C_i=\eye-L_Y^{-1}\Sigma_{YX_i}\Sigma_{X_i}^{+}\Sigma_{X_iY}L_Y^{-\top}.
\end{equation}
Here $\Sigma_{X_i}^{+}$ denotes the Moore--Penrose pseudoinverse.



\paragraph{Finite-noise whitening.}
If needed, we transform the marginals to whitened coordinates
in which the target has identity covariance and each source is observed
with standard Gaussian noise:
\begin{equation}
 \tilde X_i=H_i\tilde Y+N_i,\qquad \tilde Y\sim\mathcal N(0,\eye),\qquad N_i\sim\mathcal N(0,\eye).
\end{equation}
The covariance blocks below refer to the marginals after discarding
zero-variance source directions and, for redundancy, any smoothing. Specifically, with Cholesky factorizations
\begin{equation}
 \Sigma_Y=L_YL_Y^\top,\qquad
 \Gamma_i=\Sigma_{X_iY}L_Y^{-\top},\qquad
 \Sigma_{X_i}-\Gamma_i\Gamma_i^\top=L_iL_i^\top,
 \label{sm:eq:preprocessing}
\end{equation}
we set $\tilde Y=L_Y^{-1}(Y-\mu_Y)$, $\tilde X_i=L_i^{-1}(X_i-\mu_{X_i})$, and
$H_i=L_i^{-1}\Gamma_i\in\mathbb R^{d_i\times d_Y}$. Here
$\Sigma_{X_i}-\Gamma_i\Gamma_i^\top=\Sigma_{X_i|Y}$ is the
conditional noise covariance, and the factor $L_Y$ is shared by all sources.
In these coordinates, the normalized posterior covariances of
Sec.~\ref{sec:gaussian-pid} are $C_i=(\eye+H_i^\top H_i)^{-1}\succ0$. 

\subsection{Redundancy}
\label{sm:sec:optimization-redundancy}
\paragraph{Singular $C_i$.}
Before proceeding, we consider the case where some $C_i$ may be singular. First, observe that redundancy is infinite if $\sum_i C_i$ is singular. Thus, we consider the case where $\sum_iC_i\succ0$ but some $C_i$ are singular.

We smooth the channels as in the proof of
Proposition~\ref{prop:posterior-gaussian-redundancy}: for $0<t<1$, replace the
whitened target $\tilde Y$ by $\tilde Y_t=\sqrt{1-t}\,\tilde Y+\sqrt t\,G$, with independent
$G\sim\mathcal N(0,\eye)$. This gives smoothed channels with
$C_i(t)=(1-t)C_i+t\eye\succ0$, whose redundancy we denote $I_\cap(t)$. Let
$\gamma=\lambda_{\min}(n^{-1}\sum_iC_i)$. Here $0<\gamma<1$: the lower bound
holds because $\sum_iC_i\succ0$, and the upper bound because some unit
vector $v$ has $v^\top C_jv=0$, while $v^\top C_iv\le1$ for all $i$. Then
we have
\begin{equation}
 0\le I_\cap-I_\cap(t)\le
 E(t):=\frac{d_Y}{2}\ln\bigl(1+t(\gamma^{-1}-1)\bigr).
 \label{sm:eq:singular-smoothing-error}
\end{equation}
Every feasible posterior $R$ of the original problem satisfies
$R\succeq n^{-1}\sum_iC_i\succeq\gamma\eye$, and
$R_t=(1-t)R+t\eye$ is feasible for the smoothed problem. Applying this to
the original optimizer and bounding $\ln|R_t|-\ln|R|$ eigenvalue by
eigenvalue gives Eq.~\eqref{sm:eq:singular-smoothing-error}; the lower bound
follows because smoothing shrinks the feasible set.
We choose the smoothing weight once, before optimization:
\begin{equation}
 t=\min\left\{\frac12,\;
 \frac{e^{\epsilon_{\rm abs}/d_Y}-1}{\gamma^{-1}-1}\right\},
 \label{sm:eq:singular-smoothing-weight}
\end{equation}
so $E(t)\le\epsilon_{\rm abs}/2$, and apply the solver described below to the
smoothed channels with tolerance $\epsilon_{\rm abs}^\prime =\epsilon_{\rm abs}/2$.
If the smoothed solver returns a certified interval $[I_t^{\mathrm{LB}},I_t^{\mathrm{UB}}]$, then
$[I_t^{\mathrm{LB}},I_t^{\mathrm{UB}}+E(t)]$ bounds the original redundancy and has width at most
$\epsilon_{\rm abs}$; we return its midpoint.
The smoothed channels have $C_i(t)\succ0$, so the whitening of
Sec.~\ref{sm:sec:optimization-preprocessing} applies to them.

\paragraph{Reduction.}
To compute redundancy, we first restrict Eq.~\eqref{eq:bl2} to the
subspace of target directions observed by every source. In the whitened
coordinates, the constraints of Eq.~\eqref{eq:bl2} read
$\eye\preceq M\preceq C_i^{-1}=\eye+A_i$, with $A_i:=H_i^\top H_i$. Every
feasible $M$ therefore satisfies $0\preceq M-\eye\preceq A_i$ for all $i$,
so $\operatorname{range}(M-\eye)$ lies in the common range
\begin{equation}
 \mathcal C:=\bigcap_i\operatorname{range}(A_i)=\bigcap_i\operatorname{range}(H_i^\top),
\end{equation}
the set of target directions observed by every source. Restricting to
$\mathcal C$ shrinks the problem from $d_Y$ to
$d_Q=\dim\mathcal C\le\min(d_Y,\min_i d_i)$ dimensions, where $d_Q$ is as in Theorem~\ref{thm:gaussian-optimality-redundancy}, and removes the
directions in which the constraints are degenerate.

If $d_Q=0$, redundancy is zero. Otherwise, every feasible $M$ has the form
$M=\eye+PKP^\top$ with $K\in\psd{d_Q}$, where the columns of
$P\in\mathbb R^{d_Y\times d_Q}$ form an orthonormal basis of $\mathcal C$. The constraint
$M\preceq\eye+A_i$ then reads $PKP^\top\preceq A_i$, which, because
$\operatorname{range}(P)\subseteq\operatorname{range}(A_i)$, is equivalent to
$K\preceq(P^\top A_i^\dagger P)^{-1}$. With $D_i:=(P^\top A_i^\dagger P)^{-1}$, the problem becomes
\begin{equation}
 f(K)=-\ln|\eye+K|,\qquad
 I_\cap=-\frac12\min_{0\preceq K\preceq D_i\ \forall i}f(K).
 \label{sm:eq:redundancy-program}
\end{equation}

The orthonormal basis $P$ can be computed as follows. 
The SVD of each $H_i$
gives an orthonormal basis $B_i^\perp$ of $\ker H_i=\ker A_i$.
A direction lies in $\mathcal C$ iff it is orthogonal to
the columns of every $B_i^\perp$, so $P$ is an orthonormal basis of the null space of the
stacked matrix $[B_1^\perp\ \cdots\ B_n^\perp]^\top$, which can be obtained using one more
SVD. Working with the $H_i$
rather than the $A_i$ avoids squaring small singular values. 
If every
$H_i$ has full column rank, $P=\eye$.

\paragraph{Bounds.}
A feasible $K$ gives the primal upper bound $f(K)$ on the minimum. For a
lower bound, attach multipliers $Z_i\succeq0$ to the constraints
$K\preceq D_i$ and $Z_0\succeq0$ to $K\succeq0$, and let
$Z:=\sum_iZ_i-Z_0$. For feasible $U$, the added terms
$\operatorname{tr}[Z_i(U-D_i)]$ and $-\operatorname{tr}(Z_0U)$ are
nonpositive, so
\begin{equation}
 \begin{aligned}
 \min_{U\ \mathrm{feasible}}f(U)
 &\ge\inf_{\eye+U\succ0}\Big\{-\ln|\eye+U|+\sum_i\operatorname{tr}[Z_i(U-D_i)]-\operatorname{tr}(Z_0U)\Big\}\\
 &=\ln|Z|+d_Q-\operatorname{tr}Z-\sum_i\operatorname{tr}(Z_iD_i)=:\ell,
 \end{aligned}
 \label{sm:eq:optimization-dual-bound}
\end{equation}
where the inequality also enlarges the domain to $\eye+U\succ0$. When
$Z\succ0$, the expression in braces is convex in $U$ with gradient
$-(\eye+U)^{-1}+Z$, so the infimum is attained at $\eye+U=Z^{-1}$, which
gives the second line; otherwise, the infimum is $-\infty$.

Any multipliers give a valid bound, and we choose them from the current
iterate. At a strictly feasible $0\prec K\prec D_i$ and barrier weight
$\mu>0$, we take the barrier multipliers $Z_i=\mu(D_i-K)^{-1}$
and let $\bar Z:=\sum_iZ_i$. We then
choose $Z_0$ to make the bound as tight as possible: maximizing $\ell$ over
$Z_0\succeq0$ amounts to maximizing $\ln|Z|-\operatorname{tr}Z$ subject to
$0\prec Z\preceq\bar Z$. The maximizer is
\begin{equation}
 Z=\min(\eye,\bar Z),
 \label{sm:eq:optimal-lower-multiplier}
\end{equation}
obtained by replacing each eigenvalue $z_j$ of $\bar Z$ by $\min(1,z_j)$: it
is feasible, and $Z^{-1}-\eye\succeq0$ is a multiplier for the constraint
$Z\preceq\bar Z$ that satisfies complementary slackness. Writing
$\ell(K,\mu)$ for the resulting bound gives the information interval
\begin{equation}
 -\tfrac12f(K)\le I_\cap\le-\tfrac12\ell(K,\mu).
\end{equation}

\paragraph{Barrier.}
The barrier objective is
\begin{equation}
 \Phi_\mu(K)=-\ln|\eye+K|-\mu\ln|K|-\mu\sum_i\ln|D_i-K|.
 \label{sm:eq:optimization-barrier}
\end{equation}
Writing $\Theta=\eye+K$, the gradient, Hessian action on a
symmetric matrix $E$, and preconditioner factor are
\begin{equation}
 \begin{aligned}
 \nabla\Phi_\mu(K)&=-\Theta^{-1}-\mu K^{-1}+\mu\sum_i(D_i-K)^{-1},\\
 \nabla^2\Phi_\mu(K)[E]&=\Theta^{-1}E\Theta^{-1}+\mu K^{-1}EK^{-1}+\mu\sum_i(D_i-K)^{-1}E(D_i-K)^{-1},\\
 T_\mu(K)&=\Theta^{-1}+\sqrt\mu\Big(K^{-1}+\sum_i(D_i-K)^{-1}\Big).
 \end{aligned}
 \label{sm:eq:optimization-newton}
\end{equation}
Using $\operatorname{tr}[Z_i(D_i-K)]=d_Q\mu$, the primal--dual gap can be
written as a sum of nonnegative terms,
\begin{equation}
 f(K)-\ell(K,\mu)
 =\sum_j\big(\lambda_j-1-\ln\lambda_j\big)+\operatorname{tr}(Z_0K)+nd_Q\mu,
 \label{sm:eq:direct-gap}
\end{equation}
where $\lambda_j$ are the eigenvalues of $\Theta^{1/2}Z\Theta^{1/2}$. At an
exact barrier minimizer, the barrier multiplier $Z_0=\mu K^{-1}$ gives
$Z=\Theta^{-1}$, so every $\lambda_j=1$ and
\begin{equation*}
 f(K)-\ell(K,\mu)=(n+1)d_Q\mu.
\end{equation*}
The gap obtained with Eq.~\eqref{sm:eq:optimal-lower-multiplier} is at most
this large, so it shrinks in proportion to $\mu$ along the central path.

\subsection{Union information}
\label{sm:sec:optimization-union}

We assume without loss of generality that $C_i\succ 0$, since $I_\cup$ is infinite if any $C_i$ is singular.  

\paragraph{Reduction.}
To compute union information, we drop target directions that no source
observes. As in the proof of
Theorem~\ref{thm:gaussian-optimality-union} (Sec.~\ref{app:union-optimality}),
let $O=[P_\cup\ P_\cup^\perp]$ be an orthogonal matrix whose first $r$
columns $P_\cup\in\mathbb R^{d_Y\times r}$ span a subspace containing
$\mathcal U=\operatorname{range}\big(\sum_i(\eye-C_i)\big)$. Since each $\eye-C_i$ vanishes on $\operatorname{range}(P_\cup^\perp)\subseteq\mathcal U^\perp$,
\begin{equation}
 C_i=O\begin{pmatrix}P_\cup^\top C_iP_\cup&0\\0&\eye\end{pmatrix}O^\top,
\end{equation}
and the Fischer-inequality argument of that proof, which uses only this
block form, shows that the union optimizer has the same block form,
$\Wcup=O\operatorname{diag}(W_r,\eye)O^\top$. For $W=O\operatorname{diag}(W_r,\eye)O^\top$, we have
$\ln|W|=\ln|W_r|$, and $W\preceq C_i$ iff $W_r\preceq P_\cup^\top C_iP_\cup$.
It therefore suffices to solve Eq.~\eqref{eq:union-convex} for the
$r\times r$ matrices $P_\cup^\top C_iP_\cup$.

It remains to construct $P_\cup$. In whitened coordinates,
$\eye-C_i=H_i^\top(\eye+H_iH_i^\top)^{-1}H_i$ has the same range as
$H_i^\top$, so $\mathcal U$ is the column space of
$[H_1^\top\ \cdots\ H_n^\top]$. If $\sum_i d_i\ge d_Y$, we take $P_\cup=\eye$
and $r=d_Y$. Otherwise, the stacked matrix $[H_1^\top\ \cdots\ H_n^\top]$ has
$d_Y$ rows and $r:=\sum_id_i<d_Y$ columns, and we compute its thin QR
factorization
\begin{equation}
 [H_1^\top\ \cdots\ H_n^\top]=P_\cup T,
\end{equation}
where $P_\cup\in\mathbb R^{d_Y\times r}$ has orthonormal columns and
$T\in\mathbb R^{r\times r}$ is upper triangular. Then $H_iP_\cup$ is the $i$th
row block of $T^\top$, and
$P_\cup^\top C_iP_\cup=\big(\eye+(H_iP_\cup)^\top H_iP_\cup\big)^{-1}$ follows
from $T$ alone, without forming $P_\cup$ or the full $C_i$. We keep all $r$
QR directions, even when the stacked channel matrix is rank deficient, so that no weakly observed direction
is discarded by a numerical rank decision.

Writing $K\in\psd{r}$ for the reduced matrix $W_r$ and
$D_i:=P_\cup^\top C_iP_\cup$ for the reduced bounds, the problem is
\begin{equation}
 f(K)=-\ln|K|,\qquad
 I_\cup=\frac12\min_{0\preceq K\preceq D_i\ \forall i}f(K).
 \label{sm:eq:union-program}
\end{equation}

\paragraph{Bounds.}
A feasible $K$ gives the primal upper bound $f(K)$ on the minimum. For a
lower bound, attach multipliers $Z_i\succeq0$ to the constraints
$K\preceq D_i$, and let $Z:=\sum_iZ_i$. As for redundancy,
\begin{equation}
 \min_{U\ \mathrm{feasible}}f(U)
 \ge\inf_{U\succ0}\Big\{-\ln|U|+\sum_i\operatorname{tr}[Z_i(U-D_i)]\Big\}
 =\ln|Z|+r-\sum_i\operatorname{tr}(Z_iD_i)=:\ell,
 \label{sm:eq:union-dual-bound}
\end{equation}
where, if $Z\succ0$, the infimum is attained at $U=Z^{-1}$; otherwise it is $-\infty$. At a strictly feasible
$0\prec K\prec D_i$ and barrier weight $\mu>0$, we take
$Z_i=\mu(D_i-K)^{-1}$, so that $Z\succ0$ automatically. Writing
$\ell(K,\mu)$ for the resulting bound gives the information interval
\begin{equation}
 \tfrac12\ell(K,\mu)\le I_\cup\le\tfrac12f(K).
\end{equation}
The bound also
applies to the full, unreduced problem: extending each reduced multiplier
$Z_i$ to $P_\cup Z_iP_\cup^\top+(\eye-P_\cup P_\cup^\top)/n$ makes their sum
the identity on the discarded directions, and their contribution to the
full bound cancels.

\paragraph{Barrier.}
The barrier objective is
\begin{equation}
 \Phi_\mu(K)=-\ln|K|-\mu\sum_i\ln|D_i-K|.
 \label{sm:eq:union-barrier}
\end{equation}
The gradient, Hessian action on a symmetric matrix $E$, and
preconditioner factor are
\begin{equation}
 \begin{aligned}
 \nabla\Phi_\mu(K)&=-K^{-1}+\mu\sum_i(D_i-K)^{-1},\\
 \nabla^2\Phi_\mu(K)[E]&=K^{-1}EK^{-1}+\mu\sum_i(D_i-K)^{-1}E(D_i-K)^{-1},\\
 T_\mu(K)&=K^{-1}+\sqrt\mu\sum_i(D_i-K)^{-1}.
 \end{aligned}
 \label{sm:eq:union-newton}
\end{equation}
Using $\operatorname{tr}[Z_i(D_i-K)]=r\mu$, the primal--dual gap can be
written as a sum of nonnegative terms,
\begin{equation}
 f(K)-\ell(K,\mu)=\sum_j\big(\lambda_j-1-\ln\lambda_j\big)+nr\mu,
 \label{sm:eq:union-direct-gap}
\end{equation}
where $\lambda_j$ are the eigenvalues of $K^{1/2}ZK^{1/2}$. At an exact
barrier minimizer, $Z=K^{-1}$, so every $\lambda_j=1$ and
\begin{equation*}
 f(K)-\ell(K,\mu)=nr\mu.
\end{equation*}


\subsection{Newton--PCG iterations}
\label{sm:sec:newton-pcg}
For two-source union information, the closed form of
Theorem~\ref{thm:two-source-union-closed-form} can replace the iterations
below (``Closed form'' in Fig.~\ref{fig:pid-value-timing}).
For two-source redundancy, the fast path of
Sec.~\ref{sec:numerical-algorithms} is used when the candidate $R$ satisfies
$R\preceq\eye$.
Otherwise, we use the iterations below. They are shared by both problems:
in both, $K\in\psd{m}$ is the reduced matrix and $D_i$ are its bounds, with
$m=d_Q$ and $f$, $\ell$, $\Phi_\mu$, $T_\mu$ as in
Sec.~\ref{sm:sec:optimization-redundancy} for redundancy, and $m=r$ with the
corresponding quantities of Sec.~\ref{sm:sec:optimization-union} for union.

For conditioning, we whiten the bounds by their mean,
$n^{-1}\sum_iD_i=L_DL_D^\top$, replacing $K$ and $D_i$ by
$L_D^{-1}K L_D^{-\top}$ and $L_D^{-1}D_iL_D^{-\top}$ (and, for redundancy,
the identity in $f(K)=-\ln|\eye+K|$ by $L_D^{-1}L_D^{-\top}$).
Due to this whitening, the objective and its bounds include the additive constant $-2\ln|L_D|$;
all formulas below use the original coordinates.

Initialize $\mu=\mu_0>0$, $f^{\mathrm{LB}}=-\infty$, and
$f^{\mathrm{UB}}=0$ for redundancy or $+\infty$ for union.
Take $K$ to be half the largest multiple of $(\sum_iD_i^{-1})^{-1}$
satisfying $K\preceq D_i$ for every $i$, giving $0\prec K\prec D_i$.
Repeat the following steps:
\begin{enumerate}
\item \textbf{Update the bounds.} Set
      \begin{equation}
       f^{\mathrm{UB}}\leftarrow\min\{f^{\mathrm{UB}},f(K)\},\qquad
       f^{\mathrm{LB}}\leftarrow\max\{f^{\mathrm{LB}},\ell(K,\mu)\}.
      \end{equation}
      If the information interval has width
      $\frac12(f^{\mathrm{UB}}-f^{\mathrm{LB}})\le\epsilon_{\rm abs}$, the
      requested absolute accuracy, return its midpoint,
      $\frac14(f^{\mathrm{UB}}+f^{\mathrm{LB}})$ for union or
      $-\frac14(f^{\mathrm{UB}}+f^{\mathrm{LB}})$ for redundancy.
\item \textbf{Compute a direction.} With $g=\nabla\Phi_\mu(K)$, use CG
      to approximately solve the Newton equation with preconditioner
      $\mathcal P$:
      \begin{equation}
       \nabla^2\Phi_\mu(K)[\Delta]=-g,\qquad
       \mathcal P(E)=T_\mu^{-1}(K)ET_\mu^{-1}(K).
      \end{equation}
      Replace a non-descent direction by $-\mathcal P(g)$.
\item \textbf{Check centering.} If
      $-\langle g,\Delta\rangle/2\le\mu^2$ (an estimate of half the squared Newton decrement),
      or if $N_{\max}$ Newton steps have already been taken at the current
      $\mu$, set $\mu\leftarrow\rho\mu$ and return to step~1 with the same
      $K$. Otherwise, proceed at the current barrier weight.
\item \textbf{Take a step.} Starting from $\alpha=1$, halve $\alpha$
      until $K+\alpha\Delta$ is strictly feasible (checked by Cholesky) and
      \begin{equation}
       \Phi_\mu(K+\alpha\Delta)
       \le\Phi_\mu(K)+c\alpha\langle g,\Delta\rangle.
      \end{equation}
      Set $K\leftarrow K+\alpha\Delta$ and return to step~1. If no
      step is accepted after $N_{\mathrm{ls}}$ halvings, keep $K$ and set
      $\mu\leftarrow\rho\mu$.
\end{enumerate}
The full optimization
loop (CG, line search, barrier continuation, and certification) is
just-in-time compiled in JAX. CG stops at a relative Frobenius residual of
$0.1$ or after $\max(100,2m)$ iterations. The remaining parameters are
$\mu_0=0.1$, $\rho=0.25$, $N_{\max}=30$, $N_{\mathrm{ls}}=80$, and the
line-search constant $c=10^{-4}$.

\subsection{Benchmark instances and comparison methods}
\label{sm:sec:experiments}

The random channels in Fig.~\ref{fig:pid-value-timing} are
$X_i=H_iY+N_i$, where $Y$ and the noises $N_i$
are independent standard Gaussians and $(H_i)_{jk}\sim\mathcal N(0,1/d_Y)$.
We generate five matched random instances per setting. 
The five sweeps vary all dimensions
together, one source dimension, both source dimensions together, the
target dimension, and the source count, as specified in the figure caption.
For our certified stopping rule, we use
$\epsilon_{\rm abs}=0.01$ nats.

All timings were measured on a MacBook Pro with an Apple M4 Pro chip
(macOS~15.7, Python~3.12, NumPy~2.5, and JAX~0.7.1), using a single CPU thread and 64-bit
floating point. 
Every matrix shape is compiled and warmed up before
measurement. The reported times exclude JAX compilation, input
generation, and independent final accuracy checks. Each call has a
$300$-second wall-clock limit. 
Ordinary markers
require five successful runs. Crosses indicate at least
one $300$-s timeout. 

\paragraph{Certified accuracy.}
Our formulation returns explicit bounds $I^{\mathrm{LB}}\le I\le I^{\mathrm{UB}}$ on the globally
optimal information value and reports success only when
$I^{\mathrm{UB}}-I^{\mathrm{LB}}\le\epsilon_{\rm abs}$. The original gradient-based BROJA
implementations~\citep{venkatesh2023gaussian,zhao2025partial} instead stop
when the objective stagnates, which does not certify optimality.

Thus, for each matched instance, we first obtain our certified union interval
$[I^{\mathrm{LB}},I^{\mathrm{UB}}]$. We then run each competing method until
its feasible objective reaches or improves on
$I^{\mathrm{UB}}$, or until the time limit is reached. The competitors do not compute lower bounds or certify
an optimality gap themselves. This protocol deliberately favors them, because 
our timing includes certification. 

\paragraph{Source-space BROJA union information (Newton).}
We compare our target-space union solver with the existing approach to
Gaussian BROJA union information, which optimizes over the joint
distribution of the sources rather than over a single $d_Y\times d_Y$
matrix. Because union information is the same under both definitions,
the two approaches compute the same quantity, so their runtimes can be
compared directly.
For whitened channels, stack $H=(H_1^\top,\ldots,H_n^\top)^\top$ and let
$S$ be the joint covariance of the source noises. Fixing each diagonal
block $S_{ii}=\eye$ preserves every source--target marginal, so
\begin{equation}
 I_\cup=\inf_{S\succ0:\,S_{ii}=\eye}\varphi(S),
 \qquad \varphi(S)=\tfrac12\ln|\eye+H^\top S^{-1}H|,
 \label{sm:eq:source-union-program}
\end{equation}
where the infimum includes limits at singular boundary points. For two
sources, this is the cross-noise formulation of \citet{venkatesh2023gaussian}
and \citet{zhao2025partial}.
``BROJA (Newton)'' is our implementation: starting from $S=\eye$, it
minimizes $2\varphi(S)-\mu\ln|S|$ over the full
$(\sum_i d_i)\times(\sum_i d_i)$ matrix $S$ along the log-barrier path,
using the same compiled Newton--PCG driver as our target-space solver
with a Hessian-diagonal preconditioner.


\paragraph{Source-space BROJA union information (Thin-PID).}
Fig.~\ref{sm:fig:pid-value-timing} adds ``BROJA (Thin-PID)'', the
RProp-based Thin-PID optimizer of \citet{zhao2025partial} for the same
source-space problem, with the same stopping rule.


\begin{figure}[t]
\centering
\includegraphics[width=\textwidth]{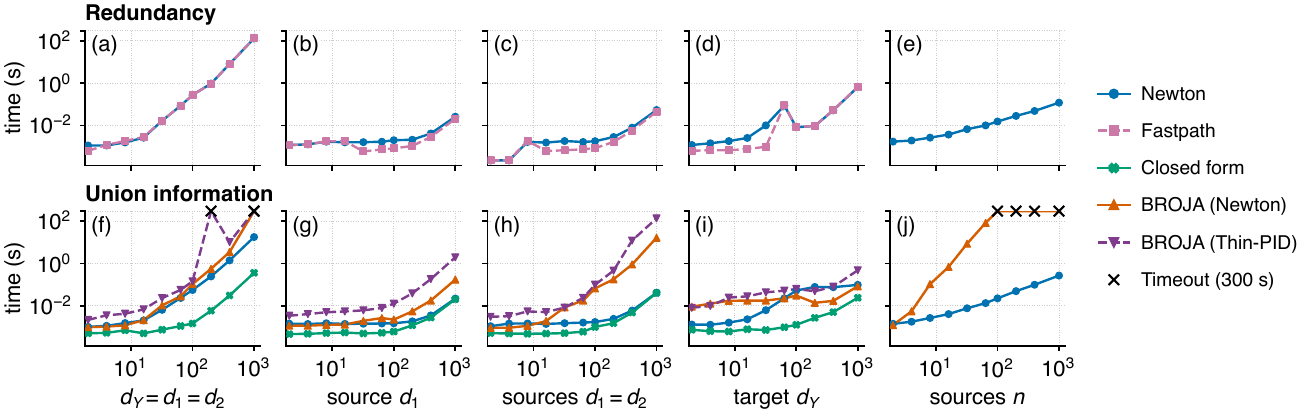}
\caption{\label{sm:fig:pid-value-timing}
Extended comparison on the same instances as Fig.~\ref{fig:pid-value-timing},
with the same panel definitions and BROJA (Thin-PID) added.
Our iterative methods certify a gap of at most $0.01$ nats; the BROJA
methods stop against our certified upper bound.
Ordinary markers average five successful runs. Crosses indicate at least
one $300$-s baseline timeout. 
}
\end{figure}

\end{document}